\definecolor{dgreen}{rgb}{0,.8,.3}
\definecolor{lblue}{rgb}{.2,.3,.7}
\definecolor{blue}{rgb}{0,0,0.5}
\newtheorem{theorem}{Theorem}
\newtheorem{lemma}{Lemma}
\newenvironment{proof}{{\bf Proof:}}{\hfill$\bull$\medskip}
\newcommand{\bull}{\vrule height 1.8ex width 1.0ex depth 0ex}
\begin{document}
\begin{center}
{\bf\Large Sparse Portfolio Selection via \\ \vspace{0.1in} Quasi-Norm Regularization} \\
 \vspace{0.2in}
Caihua Chen\footnote{International Center of Management Science and Engineering, School of Management and Engineering, Nanjing Univeristy, China. This author is partially supported by the Natural Science Foundation
of Jiangsu Province BK20130550 and the Natural Science Foundation of China
NSFC-71271112. Email: chchen@nju.edu.cn.
},
Xindan Li\footnote{International Center of Management Science and Engineering, School of Management and Engineering, Nanjing Univeristy, China. This author is partially supported by the Natural Science Foundation of China NSFC-70932003. Email: xdli@nju.edu.cn.},
Caleb Tolman\footnote{Department of Management Science
and Engineering, School of Engineering, Stanford University, USA.
Email: calebj@stanford.edu. This author is
partially supported by AFOSR Grant FA9550-12-1-0396.},
Suyang Wang\footnote{International Center of Management Science and Engineering, School of Management and Engineering, Nanjing Univeristy, China; and Department of Management Science
and Engineering, School of Engineering, Stanford University, USA. This author is supported by CSC. Email: suyangw@stanford.edu.},
Yinyu Ye\footnote{Department of Management Science
and Engineering, School of Engineering, Stanford University, USA; and International Center of Management Science and Engineering, School of Management and Engineering, Nanjing University.
Email: yyye@stanford.edu. This author is
partially supported by AFOSR Grant FA9550-12-1-0396.} \\
December 16, 2013.  
\end{center}
\noindent {\bf Abstract}
In this paper, we propose $\ell_p$-norm regularized models to seek
near-optimal sparse portfolios. These sparse solutions reduce the complexity of portfolio implementation and management. Theoretical results are established to guarantee the sparsity of the second-order  KKT points of the $\ell_p$-norm regularized models. More interestingly, we present a theory that relates sparsity of the KKT points with Projected correlation and Projected Sharpe ratio. We also design an interior point algorithm to obtain an approximate second-order KKT solution of the $\ell_p$-norm models in polynomial time with a fixed error tolerance, and then test our $\ell_p$-norm modes on  S\&P 500 (2008-2012) data and international market data.\ The computational results illustrate that the $\ell_p$-norm regularized models can generate portfolios of any desired sparsity with portfolio variance and portfolio return comparable to those of the unregularized Markowitz model with cardinality constraint. Our analysis of a \emph{combined} model lead us to conclude that sparsity is not \emph{directly} related to overfitting at all.  Instead, we find that sparsity moderates overfitting only \emph{indirectly}.  A combined $\ell_1$-$\ell_p$ model shows that the proper choose of leverage,  which is the amount of additional buying-power generated by selling short can mitigate overfitting; A combined $\ell_2$-$\ell_p$ model is able to produce extremely high performing portfolios that exceeded the 1/N strategy and all $\ell_1$ and $\ell_2$ regularized portfolios. 
 \vspace{0.1in}

 \noindent {\bf Keywords:}  Markowitz model, sparse portfolio management,
   $\ell_p$-norm regularization, optimality condition, Sharpe ratio.

\noindent {\bf AMS Subject Classifications:}  90B50, 90C90,91G10
\vspace{0.2in}
\newpage
\section{Introduction}
\medskip
The origin of modern portfolio theory can be traced back to the early 1950's, beginning with Markowitz's work \cite{Markowitz:1952} on mean-variance
formulation. Given a basket of securities, the Markowitz model seeks to find
the optimal asset allocation of the portfolio by minimizing
the estimated variance with an expected return above a specified level.

Although the Markowitz mean-variance model captures the most two essential aspects in
portfolio management---risk and return,  it is not trivial to implement  the model directly
in the real world. One of the most critical challenge is the overfitting problem. Overfitting arises from the inability to perfectly estimate the mean and covariance of real-world objects. In fact, due to high dimensionality and non-normal distribution of the unknown variable, these estimates are especially inaccurate for stock data. Indeed,  \cite{Merton:1980} shows that most of the difficulty lies on the mean estimate.  Moreover, \cite{DeMiguel:2009}
show that in order to estimate the expected return of  portfolio of 25 stocks with satisfactorily low error, one would need on the order of 3000 months of data, which is both extremely difficult to acquire and too long for the model to obey the time-invariance assumptions. The Markowitz model does nothing to prevent the overfitting that comes from mis-estimation, and thus performs poorly across most out-of-sample metrics. For example, \cite{Demiguel:2009a} evaluate the out-of-sample performance of the mean-variance model and find that none of algorithms to compute the solution of the Markowitz model consistently outperforms the naive ${1/N}$ (equal amounts of every stock)
portfolio.

To alleviate the overfitting, several variants of the Markowitz model with regularizers/additional constraints have been proposed in the literature. The modifications can be viewed as adding a prior belief on the true yet unknown return distributions (as suggested by \cite{Merton:1980}). In \cite{Jagannathan:2003}, the authors impose a non-shortsale constraint to the mean-variance formulation despite the fact that leading theory speaks against this constraint. Surprisingly, the ``wrong'' constraint helps the model to find solution with better out-of-sample performance. More recently, \cite{Brodie:2009} and \cite{Rosenbaum:2010} succeed in applying the $\ell_1$-norm technique to the Markowitz model to obtain sparse portfolios with higher Sharpe ratio and stability than the naive $1/N$ rule. By adding a norm ball constraint to the portfolio-weight vector, \cite{DeMiguel:2009} provide a general framework for determining the optimal portfolio. The computational results demonstrate that the norm ball constrained portfolios typically achieve lower out-of-sample variance and higher out-of-sample Sharpe ratio than the proposed strategies in \cite{Jagannathan:2003}, the naive $1/N$ portfolio and many others in the literature.

Meanwhile, the optimal portfolio of Markowitz's classical model often holds a huge number of assets and some assets admit extremely small weights. Such a solution, however, is not attainable in most situations of the real market. Due to physical, political and economical constraints,  investors  would be willing to sacrifice a small degree of performance for a more manageable sparse portfolio (see \cite{Shefrin:2000, Boyle:2012,Guidolin:2013} and references therein).  An illustrative example comes from the most successful investor of the 20th century,  Warren Buffet, who advocates investing in a few familiar stocks, which is also supported by the early work of Keynes (see \cite{Moggridge:1983}).

A popular way to construct the sparse portfolio is via the cardinality constrained portfolio selection (CCPS) model (\cite{Bertsimas:2009,Cesarone:2009,Maringer:2003}) , i.e., choose a specified number of assets to form an efficient portfolio. Unfortunately, the inherent combinatorial property makes the cardinality constrained problem NP-hard generally and hence computationally intractable. By relaxing the hard cardinality constraint, many heuristic methods \cite{Bienstock:1996,Chang:2000}  have been proposed to solve the CCPS. Very recently, by  relaxing the objective function as some separable functions, \cite{Gao:2013} obtain a cardinality constrained relaxation of CCPS with closed-form solution. The new relaxation combined with a  branch-and-bound algorithm (Bnb) yields a highly efficient solver, which outperforms CPLEX significantly.

The main objective of our paper is to propose a novel and non-CCPS portfolio strategy with complete flexibility in choosing sparsity while still maintaining satisfactory out-of-sample performance. Here, we discuss a new regularization of Markowitz's  portfolio construction both with and without the shortsale constraint. To accomplish this objective, we turn to the $\ell_p$-norm ($0<p<1$) regularization which recently attracts a growing interest from  the optimization community due to its important role in inducing sparsity. Theoretical and empirical results indicate that the $\ell_p$-norm regularization (\cite{Chartrand:2007,Xu:2009,Ji:2013,Saab:2008}) could have better stability and sparsity than the traditional $\ell_1$-norm regularization. In this work, we take a step to study the theoretical and computational performance of the $\ell_p$-norm regularized portfolio optimization problem in the framework of the Markowitz model.

The contributions of our paper include (i) a novel portfolio strategy to produce 50\%--95\% more sparse portfolios with competitive out-of-sample performance compared with the Markowitz model and the $\ell_1$-norm model; (ii) a polynomial time interior point algorithm to compute the second-order KKT solutions of our $\ell_p$-norm models; iii) an extension of the modern portfolio theory that relates sparsity to  ``Projected correlation'' and ``Projected Sharpe ratio''; (iv) an ``efficient frontier'' outlining the optimal tradeoff between sparsity and expected return and variance.

The remainder of this paper is organized as follows. In Section 2, we review some relevant portfolio models in the literature and present our $\ell_p$-norm regularized formulations for sparse portfolio selection  with/without shorting constraints.  In Section \ref{sec:theory}, we develop the $\ell_p$-norm regularization portfolio theory with financial interpretation, and design a fast interior point algorithm to compute the KKT points of our regularized models in polynomial time. We also construct some toy examples to show the intuition of our portfolio theory. Section \ref{sec:computationalresults} is devoted to the computational results of the regularized models and comparison between different models, which show our portfolio strategies have high sparsity but still maintain out-of-sample performance. Section \ref{sec:discussionsandconclusions} concludes our work and provides a possible application of our research.
%More theoretical results on shorting allowed regularized Markowitz model can be found in
%Appendix 1 and
All proofs of the propositions can be found in the Appendix I and the details of our interior point algorithm are described in the Appendix II.

\section{The Related Models}
\label{sec:therelatedmodels}
Given a  portfolio consisting of $n$  stocks.  The Markowitz mean-variance portfolio
is the solution of the following constrained optimization problem
\begin{equation}\label{MVI}
\begin{array}{rl}
\min &\displaystyle \, \frac{1}{2}\,x^TQx\\[0.3cm]
\mbox{s.t.}       &  e^Tx =1,\\[0.1cm]
& m^Tx  \ge m_0,
\end{array}
\end{equation}
where $Q\in \Re^{n\times n}$ is the estimated covariance matrix of the portfolio, $m\in \Re^n$ is
the estimated return vector,  $m_0\in \Re$ is a specific return level, and $e$ is the vector of all ones with a matching dimension. Note also that, if the non-shortsale constraint $x \geq 0$ is added to \eqref{MVI}, the resulting model is the formulation of the shorting-prohibited Markowitz model.  Assume the optimal Lagrangian multiplier associated with the mean constraint is known as $\phi$. Then we can recast the Markowitz model without (with) no-shorting constraint as a linear equality constrained optimization problem
\begin{equation}\label{MVI-1}
\begin{array}{rl}
\min &\displaystyle \, \frac{1}{2}\,x^TQx - c^T x \\[0.3cm]
\mbox{s.t.}       &  e^Tx =1,\\[0.1cm]
                        & ( x \ge 0 ),
\end{array}
\end{equation}
where $c = \phi m$.

 \cite{Brodie:2009}  discuss the $\ell_1$-norm regularized Markowitz model
\begin{equation}\label{MV-l1}
\begin{array}{rl}
\min &\displaystyle \, \frac{1}{2}\,x^TQx + \rho \|x\|_1\\[0.3cm]
\mbox{s.t.}       &  e^Tx =1,\\[0.1cm]
& m^Tx  = m_0.
\end{array}
\end{equation}
Here the $\ell_1$-norm of a vector $x\in \Re^n$ is defined by $\|x\|_1 :=\sum_{i=1}^n |x_i|$
and $\rho$ is a positive penalty parameter. Sparse portfolios can be obtained by solving \eqref{MV-l1} with increasing values of $\rho$. The $\ell_1$-norm, however, cannot be effective in conjunction with the no-shorting constraint, and thus it cannot induce sparsity beyond the sparsity of the no-shorting Markowitz portfolio. This fact can be explained as follows: let $x^+$ and $-x^{-}$ denote the positive and negative entries of $x$, respectively. Then, in order to satisfy the budget constraint, we must have:
\[e^Tx^+=e^Tx^{-}+1.\]
Since $\|x\|_1 = e^Tx^+ + e^Tx^-$, we also have that $\|x\|_1=2e^Tx^{-}+1$. Thus, adding $\|x\|_1$ into the objective penalizes shorting activity the sum of the absolute negative entries in $x$ and  thus has less effect as a penalty on sparsity.

Such a gap motivates us to study the following  concave $\ell_p$-norm $\,(0<p<1)$ regularization of the no-shorting mean-variance model
\begin{equation}\label{MV-lp}
 \begin{array}{rl}
\min &\displaystyle  \frac{1}{2}\,\,x^TQx - c^Tx + \lambda\|x\|^p_p \\[0.3cm]
\mbox{s.t.}       &                  e^Tx            =1,\\[0.1cm]
       &                       x             \ge 0,
\end{array}
\end{equation}
where the $\ell_p$-norm of $x\in \Re^n$ is defined as $\|x\|_p =\sqrt[p]{\sum_{j=1}^n |x_j|^p}$. And then when $x\geq 0$, $\|x\|_p^p= \sum_{j=1}^n x_j^p$.
It is noteworthy that  the $\ell_p$-norm regularized problem \eqref{MV-lp}
can be regarded as a continuous iterative heuristic of the following CCPS problem
\begin{equation}\label{MV-cardi}
 \begin{array}{rl}
\min & \frac{1}{2}x^TQx-c^Tx\\[0.2cm]
\mbox{s.t.}       &  e^Tx =1,\\[0.1cm]
&\|x\|_0\leq K,\\[0.1cm]
& x\geq 0,
\end{array}
\end{equation}
where $\|x\|_0$ represents the number of the nonzero entries of $x$ and $K$ is the chosen limit of stocks to be managed in the portfolio.

We also study the portfolio selection problem with the no-shorting constraint removed.  Analogues to the above models, we consider the following $\ell_p$-norm model
\begin{equation}\label{MV-lp-short}
 \begin{array}{rl}
\min &\displaystyle  \frac{1}{2}\,\,x^TQx - c^Tx + \lambda\|x\|^p_p \\[0.3cm]
\mbox{s.t.}       &                  e^Tx            =1.
\end{array}
\end{equation}
Moreover, \cite{DeMiguel:2009} construct
the optimal portfolio with high Sharpe ratio via solving the following the
minimum-variance problem subject to a norm ball constraint, i.e.,
\begin{equation}\label{MV-l1con}
\begin{array}{rl}
\min &\displaystyle \, \frac{1}{2}\,x^TQx \\[0.3cm]
\mbox{s.t.}       &  e^Tx =1,\\[0.1cm]
                        &  \|x\|\leq \delta,
\end{array}
\end{equation}
where $\delta$ is a given threshold. Following this work and specifying the general norm as the $\ell_1$-norm, we propose the $\ell_1$-norm ball constrained  the $\ell_p$-norm regularized
Markowitz model
\begin{equation}\label{MV-lpreg-l1}
\begin{array}{rl}
\min & \frac{1}{2}x^TQx - c^Tx + \lambda\|x\|_p^p\\[0.2cm]
\mbox{s.t.} & e^Tx =1,\\[0.1cm]
& \|x\|_1 \le \delta.
\end{array}
\end{equation}
By splitting the vector $x:=x^+-x^{-}$,   \eqref{MV-lpreg-l1} can be equivalently written as
\begin{equation}\label{MV-lpreg-l1con}
\begin{array}{rl}
\min & \frac{1}{2}(x^+-x^{-})^TQ(x^+ - x^{-}) - c^T(x^+-x^{-}) + \lambda\|x^+\|_p^p+ \lambda \|x^{-}\|^p_p\\[0.2cm]
\mbox{s.t.} & e^Tx^+-e^T x^- =1,\\[0.1cm]
& e^T x^+ + e^T x^- \le \delta,\\[0.1cm]
& x^+ \ge 0,\, x^- \ge 0.
\end{array}
\end{equation}
Besides, we also consider the following $\ell_2-\ell_p$-norm double regularization Markowitz model
which can be seen as a Lagrangian form of \eqref{MV-l1con} with a $\ell_2$-norm ball
\begin{equation}\label{MV-lp-l2}
\begin{array}{rl}
\min & \frac{1}{2}x^TQx - c^Tx + \lambda\|x\|_p^ p + \mu \|x\|_2^2 \\[0.2cm]
\mbox{s.t.} & e^Tx =1,
\end{array}
\end{equation}
as well as its splitting form
\begin{equation}\label{MV-lp-l2-split}
\begin{array}{rl}
\min & \frac{1}{2}(x^+-x^{-})^TQ(x^+ - x^{-}) - c^T(x^+-x^{-})  +\lambda\|x^+\|_p^p +\lambda\|x^-\|_p^p + \mu \|x^+ - x^-\|_2^2 \\[0.1cm]
\mbox{s.t.} & e^Tx^+ -  e^T x^- =1,\\[0.1cm]
& x^+ \ge 0,\, x^- \ge 0,
\end{array}
\end{equation}
where $x = x^+-x^-$.  It can  be shown later that the regularization models \eqref{MV-lpreg-l1con}  and \eqref{MV-lp-l2-split}  always produces a complementary pair $x^+$ and $x^{-}$: that is, $x^+_jx^{-}_j=0$ for all $j$. In this paper, we develop theories on the models as well as computation evidences that the models produce sparse portfolios with high out-of-sample Sharpe ratio.

\section{$\ell_p$-norm Regularized Portfolio Theory} \label{sec:theory}
\setcounter{equation}{0}
\medskip
In this section, we develop theoretical results on the sparsity of the  $\ell_p$-norm regularized
models with toy examples to illustrate the intution and also provide  financial interpretation of the theory. 
Our approach to establish the 
theoretical results is motivated by the results (\cite{Chen:2010}) in singal processing.
For simplicity, hereafter we will fix $p= 1/2$.

\subsection{Bounds of Nonzero Elements of KKT Points}
First, we develop bounds on the non-zero entries of any KKT solution of the $\ell_p$-norm regularized
Markowitz model with the  non-shortsale constraint $x\geq 0$.
\begin{theorem} \label{secondordertheorem}
Let $\bar{x}$ be any second-order KKT solution of \eqref{MV-lp},
that is, a first-order KKT solution that also satisfies the second-order necessary condition, $\bar{P}$ be the support of $\bar{x}$
and $\bar{Q}$ be the corresponding covariance sub-matrix. Furthermore, let $K=|\bar{P}|$ and
\[L_i=\bar{Q}_{ii} -\frac{2}{K}(\bar{Q}e)_i+\frac{1}{K^2}(e^T\bar{Q}e),\ i\in \bar{P},\]
which are the diagonal entries of the projection of $\bar{Q}$ onto the null space of vector $e$:
\[\left(I-\frac{1}{K}ee^T\right)\bar{Q}\left(I-\frac{1}{K}ee^T\right).\]
Then it holds that
\begin{description}
\item {\rm (i)}
\[
(K-1)K^{3/2}\le \frac{4\sum_{i\in \bar{P}}L_i}{\lambda}
=\frac{4}{\lambda}\Big[{\rm tr}(\bar{Q})-\frac{1}{K}e^T\bar{Q}e\Big]
.\]
\item{\rm (ii)} If $L_i=0$ for some $i\in \bar{P}$, then $K=1$ so that $\bar{x}_i=1$; otherwise,
\[
 \bar{x}_i\geq \left (\frac{\lambda(K-1)^2}{4L_iK^2}\right)^{2/3}
\]
\end{description}
\end{theorem}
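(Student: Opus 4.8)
The plan is to extract both bounds from the second-order necessary condition alone, evaluated along a carefully chosen family of feasible directions; the explicit first-order stationarity equations and the values of the multipliers play no role. Since the constraints $e^Tx=1$ and $x\ge 0$ are linear, the Hessian of the Lagrangian at $\bar x$ restricted to the support $\bar P$ is $\bar Q-\frac{\lambda}{4}\,\mathrm{diag}(\bar x_i^{-3/2})$, where the second term comes from $\nabla^2\big(\lambda\sum_i x_i^{1/2}\big)$ with $p=1/2$. For every index $i\in\bar P$ I would test the direction $d^{(i)}:=e_i-\frac1K e$, which is supported on $\bar P$ and satisfies $e^Td^{(i)}=0$, so that $d^{(i)}$ lies in the critical cone and the second-order condition yields $(d^{(i)})^T\bar Q\,d^{(i)}\ge \frac{\lambda}{4}(d^{(i)})^T\mathrm{diag}(\bar x_j^{-3/2})d^{(i)}$. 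A one-line computation shows $(d^{(i)})^T\bar Q\,d^{(i)}=L_i$, which is precisely why $L_i$ is defined the way it is.

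For part (ii) I would simply discard the nonnegative off-diagonal contributions on the right-hand side. Writing out $(d^{(i)})^T\mathrm{diag}(\bar x_j^{-3/2})d^{(i)}=(1-\tfrac1K)^2\bar x_i^{-3/2}+\tfrac1{K^2}\sum_{j\ne i}\bar x_j^{-3/2}$ and keeping only the first term gives $L_i\ge \frac{\lambda(K-1)^2}{4K^2}\,\bar x_i^{-3/2}$; solving for $\bar x_i$ produces the stated lower bound. The degenerate case is clean: if $L_i=0$ then the inequality forces $(K-1)^2\bar x_i^{-3/2}\le 0$, and since $\lambda>0$ and $\bar x_i>0$ this is possible only when $K=1$, whence the budget constraint gives $\bar x_i=1$.

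For part (i) the equality $\sum_{i\in\bar P}L_i=\mathrm{tr}(\bar Q)-\frac1K e^T\bar Q e$ is a direct algebraic summation, so the content is the inequality. Here the crucial point is that I must retain the off-diagonal terms rather than drop them as in (ii): dropping them yields only the weaker estimate $(K-1)^2K^{1/2}$, which is insufficient. Instead, abbreviating $s_j=\bar x_j^{-3/2}$ and $S=\sum_j s_j$, I would simplify the quadratic form to $\frac{K-2}{K}s_i+\frac{S}{K^2}$ (using $(K-1)^2-1=K(K-2)$), sum over $i\in\bar P$, and watch the symmetric cross terms collapse to $\sum_i L_i\ge \frac{\lambda(K-1)}{4K}\,S$. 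The final ingredient is a lower bound on $S$: since $t\mapsto t^{-3/2}$ is convex and $\sum_{j\in\bar P}\bar x_j=1$, Jensen's inequality gives $S\ge K\cdot K^{3/2}=K^{5/2}$, and substituting yields $\sum_i L_i\ge \frac{\lambda(K-1)K^{3/2}}{4}$, which is the claim.

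The step I expect to be the main obstacle is the bookkeeping in part (i): one must resist the temptation to throw away the cross terms (which loses a factor of $K/(K-1)$ and breaks the bound) and instead carry them through the summation, where the particular coefficients of $d^{(i)}$ conspire to produce the factor $\frac{K-1}{K}$ exactly. A secondary point requiring care is the justification that each $d^{(i)}$ belongs to the critical cone so that the second-order necessary condition genuinely applies; this is routine here because the active constraints are linear and $d^{(i)}$ is supported on $\bar P$ with $e^Td^{(i)}=0$, but it should be stated explicitly.
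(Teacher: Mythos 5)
Your proposal is correct and follows essentially the same route as the paper: testing the restricted Hessian $\bar Q-\frac{\lambda}{4}\mathrm{diag}(\bar x_i^{-3/2})$ along the directions $e_i-\frac1K e$ is exactly the paper's inspection of the diagonal entries of the projected Hessian, your summation over $i$ is its trace argument, and your Jensen bound $\sum_j\bar x_j^{-3/2}\ge K^{5/2}$ is the paper's observation that this sum is minimized at $\bar x_j=1/K$. Part (ii), including the degenerate case $L_i=0\Rightarrow K=1$, matches the paper's derivation verbatim.
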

\begin{proof} Please see the proof in the Appendix I. 
\end{proof}

Note that if $\sum_{i\in \bar{P}}L_i=0$, the first statement of our theorem implies that $K=1$.
This can be explained as follows. $\sum_{i\in \bar{P}}L_i=0$ implies the projected
$\bar{Q}$ matrix
\[\left(I-\frac{1}{K}ee^T\right)\bar{Q}\left(I-\frac{1}{K}ee^T\right)=0.\]
Then, $\bar{Q}=\alpha ee^T$ for some $\alpha\ge 0$,
in which case the portfolio variance $\bar{x}^T\bar{Q}\bar{x}=\alpha$ and it is a constant. Thus, the optimal solution of the regularized problem would allocate $100\%$ into the stock with the highest $c_i$ or highest return factor.  Our theorem also implies that the greater of $\lambda$, the less of $K$. The quantity of $\sum_{i\in \bar{P}}L_i$ represents the total diversification coefficient of the set of stocks $i\in \bar{P}$; the smaller of the quantity, the less the size of $\bar{P}$ -- the set of selected stocks in the portfolio by the $\ell_p$ norm regularized
Markowitz model.

The second statement provides an even stronger notion: if any $L_i=0,\,i\in \bar{P}$, then $K=1$. Basically, it says that investing only into
the $i$th stock suffices, since no diversification can help in this case.  Note that $L_i$ can be interpreted as other stocks' correlation to
stock $i$. If $L_i=0$, then other stocks present no diversification to the $i$th stock.

Next, we move to the  $\ell_1$-norm ball constrained $\ell_p$-norm regularized  Markowitz model and the double regularized model. The following theorems characterize the bound of nonzero elements of any second-order KKT points of problem \eqref{MV-lpreg-l1con} and \eqref{MV-lp-l2}.
\begin{theorem} \label{secondordertheorem1} Let $\bar{x} =(\bar{x}^+,\bar{x}^{-})$ be any second-order KKT solution of problem
\eqref{MV-lpreg-l1con} with $\delta>1$, $\bar{P}^+$ and  $\bar{P}^-$ be the support of $\bar{x}^+$ and $\bar{x}^-$,
and $\bar{Q}^+$ and  $\bar{Q}^-$ be the corresponding covariance sub-matrices, respectively. Furthermore, let
$K^+=|\bar{P}^+|$ and $K^-=|\bar{P}^-|$,  and
\[
L^j_i=\bar{Q}^j_{ii} -\frac{2}{K^j}(\bar{Q}^je)_i+\frac{1}{(K^j)^2}(e^T\bar{Q}^je),\ i\in \bar{P}^j,\,\, for\,\,  j\in \{+,-\},
\]
which are the diagonal entries of the projection of $\bar{Q}^j$ onto the null space of vector $e$:
\[
\left(I-\frac{1}{K^j}ee^T\right)\bar{Q}^j\left(I-\frac{1}{K^j}ee^T\right).
\]
Then it holds that
\begin{description}
\item {\rm (i)}
\[\bar{P}^+\cap \bar{P}^-=\emptyset.\]

\item {\rm (ii)}
\[
(K^+ -1)(K^+)^{3/2}\le \left({\delta+1\over 2}\right)^{3/2}\frac{4\sum_{i\in \bar{P}^+}L_i}{\lambda}
=\left({\delta+1\over 2}\right)^{3/2}\frac{4}{\lambda}\Big[{\rm tr}(\bar{Q}^+)-\frac{1}{K^+}e^T\bar{Q}^+e\Big] \]
and
\[
(K^- -1)(K^-)^{3/2}\le \left({\delta-1\over 2}\right)^{3/2}\frac{4\sum_{i\in \bar{P}^-}L_i}{\lambda}
=\left({\delta-1\over 2}\right)^{3/2}\frac{4}{\lambda}\Big[{\rm tr}(\bar{Q}^-)-\frac{1}{K^-}e^T\bar{Q}^-e\Big]. \]

\item{\rm (iii)}  If $L_i=0$ for some $i\in \bar{P}^+$ (or $i\in \bar{P}^-$), then $K^+=1$ (or $K^-=1$);  otherwise,
\[
 \bar{x}^j_i\geq \left (\frac{\lambda(K^j-1)^2}{4L^j_i(K^j)^2}\right)^{2/3}, \, \, i\in \bar{P}^j,\,\,{\rm for}\,\,j\in\{+,-\}.
\]
\end{description}
\end{theorem}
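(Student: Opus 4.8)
The plan is to reduce all three claims to the single-block analysis already behind Theorem \ref{secondordertheorem}. The essential first step is the complementarity statement (i), which decouples the $x^+$ and $x^-$ pieces; once that is in hand, parts (ii) and (iii) follow by replaying the Theorem \ref{secondordertheorem} argument separately on $\bar{P}^+$ and $\bar{P}^-$, with the budget value $1$ replaced by the respective coordinate sums.

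First I would write the first-order KKT system for \eqref{MV-lpreg-l1con}. Let $\nu$ be the multiplier of $e^Tx^+-e^Tx^-=1$ and $\eta\ge 0$ that of the ball constraint $e^Tx^++e^Tx^-\le\delta$, and set $\bar{x}=\bar{x}^+-\bar{x}^-$. Stationarity at an index $i$ with $\bar{x}^+_i>0$ reads $(Q\bar{x})_i-c_i+\tfrac{\lambda}{2}(\bar{x}^+_i)^{-1/2}+\nu+\eta=0$, and at an index with $\bar{x}^-_i>0$ it reads $-(Q\bar{x})_i+c_i+\tfrac{\lambda}{2}(\bar{x}^-_i)^{-1/2}-\nu+\eta=0$. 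To prove (i) I suppose some $i$ has both $\bar{x}^+_i>0$ and $\bar{x}^-_i>0$ and add the two equations: the $Q\bar{x}$, $c$ and $\nu$ terms cancel, leaving $\tfrac{\lambda}{2}[(\bar{x}^+_i)^{-1/2}+(\bar{x}^-_i)^{-1/2}]+2\eta=0$, impossible since every term is nonnegative and the bracket is strictly positive. Hence $\bar{P}^+\cap\bar{P}^-=\emptyset$.

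With complementarity in place, write $S^+:=e^T\bar{x}^+$ and $S^-:=e^T\bar{x}^-$, so $S^+-S^-=1$ and $S^++S^-\le\delta$, giving $S^+\le\tfrac{\delta+1}{2}$ and $S^-\le\tfrac{\delta-1}{2}$. For parts (ii) and (iii) I feed into the second-order necessary condition the decoupled directions $d=(d^+,0)$ with $d^+$ supported on $\bar{P}^+$ and $e^Td^+=0$ (and symmetrically $(0,d^-)$). These satisfy $e^Td^+-e^Td^-=0$ and $e^Td^++e^Td^-=0$ and vanish off the support, so they lie in the critical cone; since $d^-=0$ the cross blocks $-Q$ drop out and the form collapses to $(d^+)^T\bar{Q}^+d^+-\tfrac{\lambda}{4}\sum_{i\in\bar{P}^+}(\bar{x}^+_i)^{-3/2}(d^+_i)^2\ge 0$, exactly the inequality behind Theorem \ref{secondordertheorem}. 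Taking $d^{+,(i)}=e_i-\tfrac{1}{K^+}e$ gives $(d^{+,(i)})^T\bar{Q}^+d^{+,(i)}=L^+_i$, and after discarding the nonnegative off-diagonal penalty terms, $L^+_i\ge\tfrac{\lambda}{4}(\bar{x}^+_i)^{-3/2}\tfrac{(K^+-1)^2}{(K^+)^2}$, which yields the lower bound on $\bar{x}^+_i$ in (iii) and forces $K^+=1$ when $L^+_i=0$. Summing the same relation (now keeping all penalty terms) over $i\in\bar{P}^+$, using $\sum_{i\in\bar{P}^+}(d^{+,(i)}_k)^2=\tfrac{K^+-1}{K^+}$ and the convexity of $t\mapsto t^{-3/2}$ (so that $\sum_{i}(\bar{x}^+_i)^{-3/2}\ge (K^+)^{5/2}/(S^+)^{3/2}$), produces $(K^+-1)(K^+)^{3/2}\le (S^+)^{3/2}\tfrac{4}{\lambda}\sum_{i\in\bar{P}^+}L^+_i$; bounding $S^+\le\tfrac{\delta+1}{2}$ gives (ii), and the $\bar{P}^-$ computation is identical with $S^-\le\tfrac{\delta-1}{2}$.

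The routine manipulations (the projection identity $(d^{+,(i)})^T\bar{Q}^+d^{+,(i)}=L^+_i$, the summation algebra, the Jensen step) are the same as in Theorem \ref{secondordertheorem}, so I do not expect them to be the obstacle. The delicate point is the reduction itself: I must verify that the decoupled directions $(d^+,0)$ genuinely belong to the critical cone of \eqref{MV-lpreg-l1con} regardless of whether the ball constraint is active, so that the second-order necessary condition legitimately applies, and that the $x^+$--$x^-$ coupling in the Hessian is annihilated by $d^-=0$. Equally, the replacement of the budget constant by $S^\pm\le\tfrac{\delta\pm1}{2}$—which produces the characteristic factors $\big(\tfrac{\delta\pm1}{2}\big)^{3/2}$—relies on complementarity to guarantee that $e^T\bar{x}^\pm$ is the sum of the $\bar{x}^\pm_i$ over the support; getting that bookkeeping right is where I would be most careful.
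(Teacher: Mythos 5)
Your proposal is correct and follows essentially the same route as the paper: part (i) by adding the two stationarity equations at a doubly-supported index to force the contradiction $\tfrac{\lambda}{2}[(\bar{x}^+_i)^{-1/2}+(\bar{x}^-_i)^{-1/2}]+2\eta=0$, and parts (ii)--(iii) by rerunning the Theorem \ref{secondordertheorem} projected-Hessian argument blockwise. The paper actually omits the details of (ii)--(iii) with the remark that they are ``similar to Theorem 1''; your write-up supplies exactly the intended missing steps, including the bookkeeping $S^+\le\tfrac{\delta+1}{2}$, $S^-\le\tfrac{\delta-1}{2}$ that produces the $\bigl(\tfrac{\delta\pm1}{2}\bigr)^{3/2}$ factors and the verification that the decoupled directions $(d^+,0)$, $(0,d^-)$ lie in the critical cone.
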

\begin{proof} Please see the proof in the Appendix I. 
\end{proof}
\medskip

\begin{theorem} \label{secondordertheorem2} Let $\bar{x} =(\bar{x}^+,\bar{x}^{-})$ be any second-order KKT solution of \eqref{MV-lp-l2-split},  $\bar{P}^+$ and  $\bar{P}^-$ be the support of $\bar{x}^+$ and $\bar{x}^-$,
 and $\bar{P} = \bar{P}^+\cup \bar{P}^-$. Furthermore, let $\bar{Q}$  be the covariance sub-matrices corresponding to $\bar{P}$, $K = |\bar{P}|$, and
\[
L_i=\bar{Q}_{ii} +2\mu-\frac{2}{K}(\bar{Q}e)_i+\frac{1}{(K)^2}(e^T\bar{Q}e),\ i\in \bar{P}.
\]
Then it holds that
\begin{description}
\item {\rm (i)}
\[\bar{P}^+\cap \bar{P}^-=\emptyset.\]
\item{\rm(ii)} If $\|\bar{x}\|_2\leq \delta$, then
\[
(K-1)K^{3/4}\le \frac{4 \delta^{3/2} \sum_{i\in \bar{P}}L_i}{\lambda}
=\frac{4\delta^{3/2}}{\lambda}\Big[{\rm tr}(\bar{Q})-\frac{1}{K}e^T\bar{Q}e\Big]
.\]

%\item {\rm (ii)}
%\[
%(K^+ -1)(K^+)^{3/2}\le \left({\delta+1\over 2}\right)^{3/2}\frac{4\sum_{i\in \bar{P}^+}L_i}{\lambda}
%=\left({\delta+1\over 2}\right)^{3/2}\frac{4}{\lambda}\Big[{\rm tr}(\bar{Q}^+)-\frac{1}{K^+}e^T\bar{Q}^+e\Big] \]
%and
%\[
%(K^- -1)(K^-)^{3/2}\le \left({\delta-1\over 2}\right)^{3/2}\frac{4\sum_{i\in \bar{P}^-}L_i}{\lambda}
%=\left({\delta-1\over 2}\right)^{3/2}\frac{4}{\lambda}\Big[{\rm tr}(\bar{Q}^-)-\frac{1}{K^-}e^T\bar{Q}^-e\Big]. \]

\item{\rm (ii)}  If $L_i=0$ for some $i\in \bar{P}$, then $K=1$
so that $\bar{x}_i= 1$ and $i\in \bar{P}^+$; otherwise,
\[
 \bar{x}^j_i \geq \left (\frac{\lambda(K-1)^2}{4L_i K^2}\right)^{2/3}, \, \, i\in \bar{P}.
\]
\end{description}
\end{theorem}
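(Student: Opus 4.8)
The plan is to mirror the two-stage strategy behind Theorems \ref{secondordertheorem} and \ref{secondordertheorem1}: read (i) off the first-order conditions, then obtain (ii)--(iii) from the second-order necessary condition tested against a well-chosen family of feasible directions. Write $\bar x=\bar x^+-\bar x^-$ and $a_i:=|\bar x_i|$, so that on $\bar P$ one has $\bar x_i=\bar x_i^+>0$ for $i\in\bar P^+$ and $\bar x_i=-\bar x_i^-<0$ for $i\in\bar P^-$. With budget multiplier $\beta$ and sign multipliers $u^\pm\ge0$, stationarity on the two supports reads
\begin{align}
[Q\bar x]_i+2\mu\bar x_i-c_i+\tfrac{\lambda}{2}(\bar x_i^+)^{-1/2}-\beta&=0,\qquad i\in\bar P^+,\\
-[Q\bar x]_i-2\mu\bar x_i+c_i+\tfrac{\lambda}{2}(\bar x_i^-)^{-1/2}+\beta&=0,\qquad i\in\bar P^-.
\end{align}
For (i), if some $i$ lay in both supports then both equations would hold there; adding them cancels every smooth term and leaves $\tfrac{\lambda}{2}(\bar x_i^+)^{-1/2}+\tfrac{\lambda}{2}(\bar x_i^-)^{-1/2}=0$, which is impossible. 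Hence $\bar P^+\cap\bar P^-=\emptyset$, and this disjointness also gives $\|\bar x\|_2^2=\sum_{i\in\bar P}a_i^2$.

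The crux is to collapse the second-order condition down to a single vector on $\bar P$. The Lagrangian Hessian in the split variables is the block matrix with diagonal blocks $M-\tfrac{\lambda}{4}D^\pm$ and off-diagonal blocks $-M$, where $M=Q+2\mu I$ and $D^\pm=\mathrm{diag}((\bar x_i^\pm)^{-3/2})$. Given any $d$ supported on $\bar P$ with $e^Td=0$, I would lift it to $(d^+,d^-)$ by setting $d_i^+=d_i,\ d_i^-=0$ for $i\in\bar P^+$ and $d_i^+=0,\ d_i^-=-d_i$ for $i\in\bar P^-$. Complementarity makes this pair admissible in the critical cone, preserves $e^Td^+-e^Td^-=0$, and lets the $-M$ cross blocks reassemble the smooth part into $(d^+-d^-)^TM(d^+-d^-)=d^T(\bar Q+2\mu I)d$, while the penalty blocks contribute $-\tfrac{\lambda}{4}\sum_{i\in\bar P}a_i^{-3/2}d_i^2$. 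The second-order condition therefore becomes
\begin{equation}
d^T(\bar Q+2\mu I)d\ \ge\ \tfrac{\lambda}{4}\sum_{i\in\bar P}a_i^{-3/2}d_i^2,\qquad e^Td=0,\ \ \mathrm{supp}(d)\subseteq\bar P,
\end{equation}
which is exactly the inequality driving Theorem \ref{secondordertheorem} with $\bar Q$ replaced by $\bar Q+2\mu I$. Since $(e_i-\tfrac1K e)^T(\bar Q+2\mu I)(e_i-\tfrac1K e)$ equals $L_i$ with the $2\mu$ loosened to $2\mu\tfrac{K-1}{K}\le 2\mu$, the $L_i$ of the statement is a valid upper bound for this quadratic form.

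The quantitative conclusions then follow the Theorem \ref{secondordertheorem} template. For (iii), test with $d=e_i-\tfrac1K e$ to obtain $L_i\ge\tfrac{\lambda}{4}a_i^{-3/2}\tfrac{(K-1)^2}{K^2}$, hence $a_i\ge(\tfrac{\lambda(K-1)^2}{4L_iK^2})^{2/3}$; and $L_i=0$ forces $(K-1)^2=0$, so $K=1$, whereupon $e^T\bar x=\bar x_i=1>0$ places $i\in\bar P^+$. For (ii), sum the reduced inequality over the $K$ directions $e_i-\tfrac1K e$: the left-hand sides total $\sum_i L_i$, and $\sum_i(\delta_{ij}-\tfrac1K)^2=\tfrac{K-1}{K}$ turns the right-hand side into $\tfrac{\lambda}{4}\tfrac{K-1}{K}\sum_j a_j^{-3/2}$. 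The one genuinely new ingredient is the lower bound on $\sum_j a_j^{-3/2}$: here the budget no longer pins down $\sum_j a_j$, so instead I use $\sum_j a_j^2\le\delta^2$ together with the power-mean inequality $M_{-3/2}\le M_2$, which yields $\sum_j a_j^{-3/2}\ge\delta^{-3/2}K^{7/4}$. Substituting gives $(K-1)K^{3/4}\le\tfrac{4\delta^{3/2}\sum_i L_i}{\lambda}$, and the trace form is obtained by expanding $\sum_i L_i$. This is precisely where the exponent $K^{3/4}$ and the factor $\delta^{3/2}$ replace the $K^{3/2}$ of Theorem \ref{secondordertheorem}.

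I expect the reduction of the second paragraph to be the real obstacle: one must justify carefully that the lifted pair $(d^+,d^-)$ lies in the critical cone---this is exactly where disjointness of the supports and strict complementarity on the sign constraints enter---and that the cross blocks truly assemble into the clean single-vector form $d^T(\bar Q+2\mu I)d$. A smaller subtlety is the $L_i=0$ claim: because $\bar Q+2\mu I$ is positive definite when $\mu>0$, the form $(e_i-\tfrac1K e)^T(\bar Q+2\mu I)(e_i-\tfrac1K e)$ vanishes only at $K=1$, so the degenerate case effectively lives at $\mu=0$, a point I would flag explicitly when writing it up.
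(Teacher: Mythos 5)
Your proof is correct and follows exactly the route the paper intends: the paper omits this proof entirely (``similar to Theorem 3.1''), and your write-up correctly supplies the two genuinely new ingredients it leaves implicit --- part (i) by adding the two stationarity equations as in Theorem 3.2, the collapse of the split-variable Hessian to the single quadratic form $d^T(\bar Q+2\mu I)d-\frac{\lambda}{4}\sum_{i\in\bar P}a_i^{-3/2}d_i^2$ on $\bar P$, and the power-mean bound $\sum_j a_j^{-3/2}\ge \delta^{-3/2}K^{7/4}$ replacing the budget-constraint bound of Theorem 3.1. One remark on the statement rather than on your argument: expanding $\sum_{i\in\bar P}L_i$ actually gives $\mathrm{tr}(\bar Q)-\frac{1}{K}e^T\bar Qe+2\mu K$, so the trace identity in part (ii) of the theorem drops a $2\mu K$ term (and, as you note, $L_i=0$ forces $\mu=0$, so that branch of part (iii) is vacuous for $\mu>0$).
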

\begin{proof} Please see the proof in the Appendix I. 
\end{proof}
\medskip

The theories developed above indicate the importance to compute a second-order KKT solution, rather than just a first-order KKT solution, of the $\ell_p$-norm regularized portfolio management problems \eqref{MV-lp} and \eqref{MV-lpreg-l1con}. In this paper, we  present an interior point algorithm to compute an approximate second KKT point in polynomial time with a fixed error tolerance; see details in the Appendix II. The overall idea of using the interior-point algorithm is to start from a fully supported portfolio $x$ (that is, $x>0$) of every stock in consideration and iteratively eliminate a fraction of stocks at the end of the process.

\subsection{Characteristics of $L_i$}
In the theory supporting our model (see Section 3.1), there arose several interesting facts and characteristics to note about the
``Projected variances'' --- $\{L_i\}$ over the support set of a portfolio selected by the $\ell_p$-norm regularized
Markowitz models.

Given any stock portfolio, with the non-zero portion denoted as $x$, having support $P$ of size $K$ one can rewrite the quantity $L_i$ in Theorem \ref{secondordertheorem}, as follows:
\begin{equation}
\label{eq:interpretationofLi}
L_i = (e^i - e^0)^T\bar{Q}(e^i-e^0) = {\rm Var}\, [\eta^T(e^i-e^0)],
\end{equation}
$e_i\in R^K$ is the vector of all zeros except $1$ at the $i$th position and $e^0=\frac{1}{K}e\in R^k$.
Here $e^i$  and $e^0$ are the respective distributions obtained by investing 100\% in stock $i$ and
$\frac{1}{K}$ in each stock of the portfolio $x$, and $\eta$ represents the  random return vector of the portfolios.
Note that $L_i$, $i=1,...,K$, is independent of the entry values of $x$.

The difference vector
$(e^i-e^0)$ can be viewed as the ``cost-neutral portfolio action'' that sells an equal amount of everything in the current portfolio and uses all those funds to buy exactly one stock, stock $i$, within the current portfolio. Thus,
$L_i$ estimates the variance of this action. Let us now consider the feasible and optimal solutions of the Markowitz Model in Lagrangian form:
\begin{equation}\label{eq:genericportfolio}
\begin{array}{rl}
\min & \frac{1}{2}x^TQx-\phi m^Tx\\[0.2cm]
\mbox{s.t.}       &  e^Tx =1,\\[0.1cm]
& x  \ge 0,
\end{array}
\end{equation}
where $\phi$ is Lagrangian multiplier associated with the expected return inequality.

For any distribution portfolio---the non-zero portion denoted as $x$---one can plot the objective function of moving in a feasible exchange direction $e^i-e^0$:
\begin{align}
\label{eq:quadraticvariance}
f[x+\varepsilon(e^i-e^0)] &=\frac{1}{2} [x+\varepsilon(e^i-e^0)]^TQ[x+\varepsilon(e^i-e^0)] - \phi m^T[x+\varepsilon(e^i-e^0)] \\[0.2cm] \nonumber
&=f(x) + \varepsilon\, {\rm Cov} \,[x^T\eta,(e^i-e^0)^T\eta]  - \varepsilon\phi(\bar{m}_i-\bar{m}_0) +  \frac{1}{2}\varepsilon^2 L_i \nonumber
\end{align}

We now consider which stock would increase the variance the least when we remove it from that portfolio $x$.  Suppose we remove stock $i$ in the direction $e^i-e^0$, then we have a new portfolio support $P/\{i\}$ with distribution $x' = x-\frac{Kx_i}{K-1}(e^i-e^0)$.
Equation \eqref{eq:quadraticvariance} would give us the
\begin{equation}
\label{eq:marginalcostsofsparsity}
\text{Marginal Costs of Sparsity}
\end{equation}
\[ MCS_i = -\frac{K}{K-1}x_i[{\rm Cov}(x^T\eta,(e^i-e^0)^T\eta) - \phi(\bar{m}_i-\bar{m}_0)] + (\frac{K}{K-1})^2x_i^2L_i.\]
These marginal costs are only \emph{upper-bounds} on the true costs of sparsity. They do not consider any further improvement that could be made by re-balancing, and thus over-estimate costs.

When our current portfolio $x$ is a near-KKT point or local minimizer, we know from the first-order conditions that the first part
$[{\rm Cov}\,(x^T\eta,(e^i-e^0)^T\eta) - \phi(\bar{m}_i-\bar{m}_0)]$ must be near zero and thus the second order term will be a good approximation for the Marginal Cost by itself.  Hence, at a near (locally) optimal portfolio $x$, the best candidate for removal can be found by searching for the smallest values of $x_i\sqrt{L_i}$.
\begin{equation}
\text{Relative Sparsity Cost Index}
\end{equation}
\[{\rm  RSC}_i = x_i\sqrt{L_i} \]
Where the smallest non-zero RSC index is the cheapest (on the margin) to eliminate from $x$, and is likely to be the cheapest (absolutely) to remove.
Thus the quantity $L_i$ can be viewed as measures of elasticity: they indicate how sensitive the objective value is to small cost-neutral changes in $x$; small $L_i$ values therefore indicate which stocks could be removed from the portfolio with lowest cost.

\subsection{The Financial Interpretation of $f'(x;e^i-e^0)$ and $\varepsilon_i^*$}
\label{sec:projectedgradientandepsilon}

\quad
The cost-neutral portfolio actions $\big \{ e^i-e^0 \,| \,i \in [1,n] \big \}$ form a basis of the feasible directions, and thus the directional derivatives of the objective along these directions form a method of sensitivity analysis.  
\begin{equation}
\label{eq:directionalderivative}
f'(x,e^i-e^0) = {\rm Cov}[x^T\eta,(e^i-e^0)^T\eta] - \phi(m_i-\bar{m})
\end{equation}
Where $\bar{m}$ is the average of the expected returns of all stocks in the \emph{support} of $x$.  At an optimal point, these derivatives must be zero.  And for small deviations from optimality, these values can be used to approximate any smooth continuous function of the optimal solution.

Next, let's pay more attentions to the optimal step-size along the basic feasible directions. Specifically, given the direction $e^i -e^0$, the corresponding optimal stepsize $\varepsilon_i^*$ is given by
\begin{equation}\label{eq:optisetp}
\varepsilon_i^* =  -\frac{f'(x;e^i-e^0)}{L_i},
\end{equation}
which follows directly from \eqref{eq:quadraticvariance}.
At any optimal point the directional derivative is zero and thus the optimal step-size is zero; but if we were to consider a small change in the projected gradient,  $\varepsilon_i^*$ estimates the changes in optimal solution by taking the direction $e^i-e^0$.

 By substituting \eqref{eq:interpretationofLi} and \eqref{eq:directionalderivative} into \eqref{eq:optisetp}, we obtain
\[
 \varepsilon_i^*=\phi\frac{m_i-\bar{m}_*}{{\rm Var}\,[(e^i-e^0)^T\eta]} - \frac{{\rm Cov}\,[x^T\eta,(e^i-e^0)^T\eta]}{{\rm Var}\,[(e^i-e^0)^T\eta]}.
\]
The two parts can be easily related to the concepts  ``Projected correlation" and ``Projected Sharpe ratio", where the Projected correlation is
\begin{equation}
 \bar{\rho}_i := \frac{{\rm Cov}\,[x^T\eta,(e^i-e^0)^T\eta]}{ {\rm Std}\,[x^T\eta]* {\rm Std}\,[(e^i-e^0)^T\eta]},
\end{equation}
and
the Projected Sharpe ratio is
\begin{equation}
\bar{S}_i = \frac{m_i-\bar{m}_*} { {\rm Std}\,[(e^i-e^0)^T\eta]} .
\end{equation}
Then the optimal step size can be equivalently written as:
\begin{equation}
\varepsilon_i^* = \bar{S}_i\frac{\phi}{ {\rm Std}\,[(e^i-e^0)^T\eta]}
-\bar{\rho_i}\frac{ {\rm Std}\,[x^T\eta]} { {\rm Std}\,[(e^i-e^0)^T\eta]}
%{\sigma_x}{\sigma_{[x^i-x^0_*]}}
\end{equation}
It is clear that the optimal step size is sensitive to the inverse of the standard-deviation of the cost-neutral portfolio (inversely), as well as to the current portfolio standard-deviation.  The Projected correlation and Projected Sharpe ratio (as well as $\phi$) give the exact coefficients of these relationships.
\subsubsection{Toy Examples}
In this section, we illustrate the previous sensitivity analysis  by some dummy examples. Consider
the first example in Table \ref{tab:toy1}, where the portfolios include three stocks with identically distributed
variance yet differing expected returns. The lower returning stock admits a slightly smaller percentage (32.33\% vs 34.33\%) in the optimal portfolio due to the small reward ($\phi =0.01$) for the expected return. Since the RSC of stock 1 attains the minimum  cost of the three stocks,
according to our sensitive analysis, the investor would intuitively decrease the investment in the first stock further
and  thus remove the first stock from the basis to form a sparse portfolio (with the increasing of $\lambda$ ). Direct calculation also shows that this is the lowest cost stock to remove.
\begin{table}[h]
\centering
\caption{Toy example 1}
\label{tab:toy1}
\begin{tabular}{c c c c c c}
Mean & Variance & $x^* (\phi=0.01)$ & $L_i$ & OK to drop & RSC  \\
$ \begin{bmatrix}
1 \\ 2 \\ 3 \\
\end{bmatrix} $
&
$ \begin{bmatrix}
2 & 1 & 1 \\ 1 & 2 & 1 \\ 1 & 1 & 2 \\
\end{bmatrix} $
&
$\begin{bmatrix}
0.3233 \\ 0.3333 \\ 0.3433 \\
\end{bmatrix}$
&
$\begin{bmatrix}
0.6667 \\ 0.6667 \\ 0.6667 \\
\end{bmatrix}$
&
$\begin{bmatrix}
Yes \\ No \\ No \\
\end{bmatrix}$
&
$\begin{bmatrix}
0.2640 \\ 0.2722 \\ 0.2803 \\
\end{bmatrix}$
\end{tabular}
\end{table}

Consider the portfolio in Table \ref{tab:toy2}, where two stocks are positively correlated yet a third stock is independent;
all the stocks share a common mean and variance. The large value of  $L_1$   (see the MCS equation in \eqref{eq:marginalcostsofsparsity}) suggests that the first stock may not be a good candidate  to be removed, which can seen clearly by comparing the variances of the portfolios with  two stocks.

\begin{table}[h]
\centering
\caption{Toy example 2}
\label{tab:toy2}
\begin{tabular}{c c c c c c c}
Mean & Variance & $x^* (\phi=0.01)$ & $L_i$ & OK to drop & RSC\\
$ \begin{bmatrix}
0 \\ 0 \\ 0 \\
\end{bmatrix} $
&
$ \begin{bmatrix}
2 & 0 & 0 \\ 0 & 2 & 1 \\ 0 & 1 & 2 \\
\end{bmatrix} $
&
$\begin{bmatrix}
0.4355 \\ 0.2823 \\ 0.2823 \\
\end{bmatrix}$
&
$\begin{bmatrix}
1.5556 \\ 0.8889 \\ 0.8889 \\
\end{bmatrix}$
&
$\begin{bmatrix}
No \\ Yes \\ Yes \\
\end{bmatrix}$
&
$\begin{bmatrix}
0.5431 \\ 0.2661 \\ 0.2661 \\
\end{bmatrix}$
\end{tabular}
\end{table}

Table \ref{tab:toy3} lists the portfolio consisting of three stocks, where the third stock is actually a zero-cost mutual fund---one that simply invest equally in the first and second stocks. This third stock creates redundancy and thus \emph{infinitely} many optimal solutions are possible (we have shown one arbitrarily).  If we were to drop either the second stock or the third (but not both) from the portfolio, then we would still be able to attain the same optimal objective (75\%-25\% mix of Stock 1 and Stock 2 respectively for this small $\phi$, and a more balanced mix  larger $\phi$).  Moreover, we see
 that $L_3 = 0$, and this fact correctly predicts that there exists a strictly sparser optimal portfolio.
\begin{table}[h]
\centering
\caption{Toy example 3}
\label{tab:toy3}
\begin{tabular}{c c c c c c c}
Mean & Variance & $x^* (\phi=0.01)$ & $L_i$ & OK to drop & RSC\\
$ \begin{bmatrix}
1 \\ 3 \\ 2 \\
\end{bmatrix} $
&
$ \begin{bmatrix}
3 & 1 & 2 \\ 1 & 7 & 4 \\ 2 & 4 & 3 \\
\end{bmatrix} $
&
$\begin{bmatrix}
0.6875 \\ 0.1925 \\ 0.1200 \\
\end{bmatrix}$
&
$\begin{bmatrix}
2 \\ 2 \\ 0 \\
\end{bmatrix}$
&
$\begin{bmatrix}
No \\ BEST \\ OK \\
\end{bmatrix}$
&
$\begin{bmatrix}
0.9722 \\ 0.2722 \\ 0.00 \\
\end{bmatrix}$
\end{tabular}
\end{table}

As a last example, consider Table \ref{tab:toy4}, where we have a set of stocks that include two of them with high variance and positive correlation to most other stocks, yet highly negative correlation with each other.  These two stocks alone would make an excellent portfolio of size two.
\begin{table}[t]
\centering
\caption{Toy example 4}
\label{tab:toy4}
\begin{tabular}{c c c c c c}
mean & Variance & $x^* (\phi=0.01)$ & $L_i$ & OK to drop & RSC \\
$ \begin{bmatrix}
0 \\ 0 \\ 0 \\ 0\\
\end{bmatrix} $
&
$ \begin{bmatrix}
8 & 7 & 6 & 6 \\ 7 & 26 & 6 & 0 \\ 6 & 6 & 96 & -68 \\ 6 & 0 & -68 & 73 \\
\end{bmatrix} $
&
$\begin{bmatrix}
0.2913 \\ 0.1166 \\ 0.2714 \\ 0.3207
\end{bmatrix}$
&
$\begin{bmatrix}
1.81 \\ 13.81 \\ 83.31 \\ 74.81
\end{bmatrix}$
&
$\begin{bmatrix}
Yes \\ No \\ No \\ No
\end{bmatrix}$
&
$\begin{bmatrix}
0.392 \\ 0.433 \\ 2.477 \\ 2.773
\end{bmatrix}$
\end{tabular}
\end{table}
Here we see that the smallest investments in the Markowitz portfolio are not necessarily the stocks to remove (to achieve the best sparse portfolio).  The best portfolio with single stock is stock 1.  The best portfolio of size 2 contains Stock 3 and 4.  The best portfolio of size 3 excludes stock 1. The Relative Sparsity Costs seem to hint at many of those choices.

\section{Computational Results}
\label{sec:computationalresults}
\medskip

\subsection{Data, Parameters and Models}
\setcounter{equation}{0}
\label{sec:dataandparameters}

To test the $\ell_p$-norm regularized models, we collected historical daily stock price data in S \& P 500 index from CRSP Database\footnote{We choose this short time-interval due to the need for a large number of intervals and the common belief that the distribution of stock prices fundamentally change shape over decades.}, which spans from 31/12/2007 to 31/12/2012.  We don't include any company unless it is traded on the market
at least 90\% of the trading days during the data period, nor do any company not listed on the market for the entire timescale. The total list has 461 companies by 1259 trading days. Since S \& P 500 stocks have a high average correlation around 0.4516, for the purpose of
testing our model under more uncorrelated data, we further considered a larger dataset that contains 53 commodity ETF daily data from American market, and 236 stocks data of Husheng 300 Index  from Chinese market.\footnote{This index contains 60\% of the market value of stocks listed in Shanghai and Shengzheng Stock Exchange of China.} To deal with the mismatch between China and America's calendars,  we set the return of stocks not traded
because of holidays on either country to zero. We employ the rolling-window method to evaluate the out-of-sample performance\footnote{Taking account into the computational time, we use 36 rolling-windows for No-shorting Constraint case and Shorting-allowed $\ell_p$-norm model, $\ell_1$-norm ball constrained  model, 12 rolling-window for $\ell_1$-norm ball constrained $\ell_p$-norm regularization model and $\ell_p-\ell_2$-norm double regularization model}, with 500 days and 537 days training window, 21 days and 63 days estimation window respectively. The portfolios obtained from S\&P data and International data are named as S\&P Portfolio and International Portfolio, respectively.

Note that the coefficient $c=\phi m$ in the linear objective term of the regularized models. To solve the $\ell_p$-norm Markowitz models,
 proper $\phi$ values should be be chosen accordingly. To achieve this objective, we first set reasonable values for the minimum target return $m_0$, and then calculate the $\phi$-values from the dual variables of  the models in constraint form.
We use mean, variance and Sharpe Ratio to evaluate the out-of-sample performance, where the Sharpe ratio computed  here uses the same method as \cite{DeMiguel:2009}.

%
%
%
%Then we relax our constraint to allow shorting position,to see the performance of our model in the situation. As studied by DeMiguel et al\cite{Demiguel:2009a},1-norm regularization method can also get a sparse and better out of sample sharpe ratio  portfolio, we use our data to redo the 1-norm case and evaluate the difference between this two method. Forth, we use a Leverage-Constrained case to see how variance and sharpe ratio will change when controlled the leverage.Finally, we use a combined model with both 1-norm and p-norm control to get a comprehensive understanding of the roles played by different models.
%For the sake of more clear graphs, when discussing the in-sample results, we only show one window results,   The in-sample results were similar for all time windows, and the comments we make here we would also make for any other time window.

\subsection{No-shorting Constraint Case}
In \cite{Demiguel:2009a}, the authors apply the $\ell_1$-norm technique to seek sparse portfolios. The $\ell_1$-norm, however,  plays no role in the Markowitz model with no-shorting constraints. However, since no-shorting environments and investors exist extensively in the real market, we turn to the $\ell_p$-norm regularization to seek portfolios with desired sparsity in this situation. As we will see later, our $\ell_p$-norm regularized model \eqref{MV-lp} with no-shorting constraints produces extremely sparse portfolios with comparison to the already sparse Markowitz no-shorting model portfolios.

The $\ell_p$-norm regularized model is compared with two benchmarks in the framework of Markowitz model with no-shorting constraints. The first one is the Markowitz model without regularization ($\lambda=0$) and the second is the cardinality-constrained portfolio selection (CCPS) model. The global optimal cardinality-constrained portfolios are found by solving the following integer formulation of problem \eqref{MV-cardi}:
\[ \begin{array}{rl}
\min &\displaystyle  \frac{1}{2}x^TQx - c^Tx\\[0.2cm]
\mbox{s.t.}       &  e^Tx =1,\\[0.1cm]
& 0 \leq x \leq d,\\[0.1cm]
&e^Td \leq K,\\[0.1cm]
& d \in \{0,1\}^n.
\end{array}
\]

\subsubsection{In-Sample Performance}

Table \ref{tab:unregulated} reports the portfolio weight, the mean, the variance and the sparsity of  the  Markowitz portfolios with the specified return $m_0$ ranging from 0.02\% to 0.12\%.  The portfolios range from 19 to 26 stocks, which are about 4.1\%-5.6\% of the full set. The expected return of each portfolio equals or exceeds the minimum target return. The trend that portfolios with higher target return also have higher estimated variance is clear in the table.

Table \ref{tab:regulated} lists the results of the $\ell_p$-norm regularized Markowitz model
with $\lambda =5.5e-6$ by our second-order interior point algorithm. Clearly,  the resulting portfolios are
of low variance and larger sparsity. Specifically, the number of positive position ranges from
3 to 6,  which are only 15-25\% of the number of stocks  in the  Markowitz portfolios and 0.5-1.5\% of the total number of stocks. We also find that these portfolios have a similar composition to the non-zero unregularized counterparts. The top companies are the same (SO, K, KMB, GIS, AZO) and there is a complete overlap between the unregularized and regularized models: none of the companies in the sparse portfolios were found with 0\% stake in the unregularized portfolios.  However, 
\newpage
\begin{table}[h]
\caption{Results for the Unregularized  Markowitz Model: In-Sample Performance. }
\label{tab:unregulated}
\centering
\vskip 0.1cm
\tiny
\begin{tabular}{ c c c c c }
\begin{tabular}{| c | c |}
\hline
$\lambda$ & 0 \\
$m_0$ & 0.0002 \\
$\phi$ & 0.00000 \\
\hline
Mean & 0.00029\\
Variance & 3.53e-5\\
Sparsity & 19\\
\hline
SO & 0.29541 \\
K & 0.13471 \\
GIS & 0.08696 \\
KMB & 0.07952 \\
PEP & 0.07762 \\
AZO & 0.07409 \\
WMT & 0.0475 \\
MCD & 0.04266 \\
HSY & 0.03874 \\
NEM & 0.03364 \\
CPB & 0.032 \\
PG & 0.026 \\
SYY & 0.01039 \\
FOH & 0.0093 \\
NFLX & 0.00565 \\
PPL & 0.00523 \\
JNJ & 0.00003 \\
CAG & 0.00002 \\
ABT & 0.00002 \\

\hline
\end{tabular}

& &

\begin{tabular}{| c | c |}
\hline
$\lambda$ & 0 \\
$m_0$ & 0.0004 \\
$\phi$ & 0.00445 \\
\hline
Mean &  0.0004\\
Variance &  3.89e-5\\
Sparsity & 24\\
\hline
SO & 0.27974 \\
KMB & 0.12563 \\
K & 0.12469 \\
AZO & 0.0961 \\
GIS & 0.07693 \\
HSY & 0.07123 \\
PEP & 0.06607 \\
WMT & 0.05964 \\
NEM & 0.02297 \\
MCD & 0.01736 \\
AAPL & 0.01388 \\
CPB & 0.0132 \\
PG & 0.01302 \\
CAG & 0.00545 \\
FOH & 0.00482 \\
PPL & 0.00302 \\
DUK & 0.00246 \\
SYY & 0.0017 \\
REGN & 0.00081 \\
ABT & 0.0007 \\
JNJ & 0.00003 \\
ORLY & 0.00002 \\
MNST & 0.00001 \\
SHW & 0.00001 \\

\hline
\end{tabular}

& &

\begin{tabular}{| c | c |}
\hline
$\lambda$ & 0 \\
$m_0$ & 0.0006 \\
$\phi$ & 0.01099 \\
\hline
Mean & 0.0006\\
Variance &  3.89e-5\\
Sparsity & 26\\
\hline
SO & 0.25522 \\
KMB & 0.17225 \\
AZO & 0.11016 \\
HSY & 0.10648 \\
K & 0.09903 \\
GIS & 0.06227 \\
WMT & 0.06134 \\
PEP & 0.0321 \\
AAPL & 0.02935 \\
REGN & 0.01786 \\
SHW & 0.01635 \\
CAG & 0.01589 \\
ABT & 0.00896 \\
DUK & 0.00846 \\
NEM & 0.00209 \\
THC & 0.00094 \\
GILD & 0.00059 \\
PPL & 0.00005 \\
PG & 0.00004 \\
ORLY & 0.00003 \\
MNST & 0.00003 \\
CPB & 0.00002 \\
LLY & 0.00002 \\
BIIB & 0.00001 \\
RAI & 0.00001 \\
SYY & 0.00001 \\

\hline
\end{tabular}

\\ \\ \\

\begin{tabular}{| c | c |}
\hline
$\lambda$ & 0 \\
$m_0$ & 0.0008 \\
$\phi$ & 0.01886 \\
\hline
Mean &  0.0008\\
Variance &  4.48e-5\\
Sparsity & 21\\
\hline
SO & 0.21501 \\
KMB & 0.20186 \\
HSY & 0.13914 \\
AZO & 0.11891 \\
K & 0.05701 \\
WMT & 0.04926 \\
SHW & 0.04633 \\
AAPL & 0.03462 \\
REGN & 0.03401 \\
GIS & 0.03158 \\
CAG & 0.02365 \\
BIIB & 0.01816 \\
DUK & 0.01522 \\
GILD & 0.00742 \\
THC & 0.00495 \\
ABT & 0.00158 \\
MNST & 0.00088 \\
ORLY & 0.00004 \\
LLY & 0.00002 \\
PEP & 0.00001 \\
RAI & 0.00001 \\

\hline
\end{tabular}

& &

\begin{tabular}{| c | c |}
\hline
$\lambda$ & 0 \\
$m_0$ & 0.0010 \\
$\phi$ & 0.02731 \\
\hline
Mean &  0.001\\
Variance &  5.4e-5\\
Sparsity & 23\\
\hline
KMB & 0.22204 \\
HSY & 0.16869 \\
SO & 0.15838 \\
AZO & 0.12639 \\
SHW & 0.0764 \\
REGN & 0.05116 \\
BIIB & 0.04093 \\
AAPL & 0.03795 \\
WMT & 0.03061 \\
CAG & 0.02457 \\
DUK & 0.02241 \\
GILD & 0.01254 \\
MNST & 0.01105 \\
THC & 0.00928 \\
K & 0.00699 \\
ORLY & 0.00008 \\
GIS & 0.00004 \\
C & 0.00004 \\
ABT & 0.00003 \\
HD & 0.00002 \\
LLY & 0.00001 \\
RAI & 0.00001 \\
AMGN & 0.00001 \\

\hline
\end{tabular}

& &

\begin{tabular}{| c | c |}
\hline
$\lambda$ & 0 \\
$m_0$ & 0.0012 \\
$\phi$ & 0.03645 \\
\hline
Mean &  0.0012 \\
Variance &  6.68e-5\\
Sparsity & 20\\
\hline
KMB & 0.22846 \\
HSY & 0.18915 \\
AZO & 0.12588 \\
SHW & 0.10123 \\
SO & 0.08532 \\
REGN & 0.06975 \\
BIIB & 0.05887 \\
AAPL & 0.03929 \\
DUK & 0.03077 \\
HD & 0.0265 \\
GILD & 0.014 \\
THC & 0.0135 \\
MNST & 0.01165 \\
CAG & 0.00304 \\
C & 0.00203 \\
WMT & 0.00019 \\
ORLY & 0.00002 \\
EXPE & 0.00002 \\
AMGN & 0.00001 \\
ABT & 0.00001 \\

\hline
\end{tabular}
\end{tabular}
\end{table}

\noindent the composition is far from identical as many low-weighted stocks in the unregularized portfolios have large weights in the sparse portfolios. Moreover, Figure \ref{fig:portfoliosparsity} shows the number of positive positions versus
the regularization parameter $\lambda$ graph of the  $\ell_p$-norm regularized portfolios.
With minor exception, increasing lambda almost always results in a more sparse
solution which is consistent with our portfolio theory developed in Section \ref{sec:theory}.
\newpage

\begin{table}[t]
\caption{Results for $\ell_p$-norm Markowitz Regularized Portfolios with  $\lambda=5.5e-6$: In-Sample Performance.}
\label{tab:regulated}
\centering
\tiny
\begin{tabular}{ c c c c c }
\begin{tabular}{| c | c |}
\hline
$\lambda$ & 5.5e-6 \\
$m_0$ & 0.0002 \\
$\phi$ & 0.00000 \\
\hline
Mean &  0.00025\\
Variance & 4.12e-5 \\

Sparsity & 3\\
\hline
SO & 0.55327 \\
K & 0.23209 \\
GIS & 0.21464 \\

\hline
\end{tabular}

& &

\begin{tabular}{| c | c |}
\hline
$\lambda$ & 5.5e-6 \\
$m_0$ & 0.0004 \\
$\phi$ & 0.00445 \\
\hline
Mean &  0.00025\\
Variance &  4.12e-5\\
Sparsity & 3\\
\hline
SO & 0.55222 \\
K & 0.22999 \\
GIS & 0.21779 \\

\hline
\end{tabular}

& &

\begin{tabular}{| c | c |}
\hline
$\lambda$ & 5.5e-6 \\
$m_0$ & 0.0006 \\
$\phi$ & 0.01099 \\
\hline
Mean &  0.00046\\
Variance &  4.05e-5\\
Sparsity & 4\\
\hline
SO & 0.38804 \\
KMB & 0.31953 \\
K & 0.16675 \\
HSY & 0.12568 \\

\hline
\end{tabular}

\\ \\ \\

\begin{tabular}{| c | c |}
\hline
$\lambda$ & 5.5e-6 \\
$m_0$ & 0.0008 \\
$\phi$ & 0.01886  \\
\hline
Mean & 0.00061 \\
Variance & 4.35e-5 \\
Sparsity & 4 \\
\hline
KMB & 0.35527 \\
SO & 0.27413 \\
HSY & 0.20689 \\
AZO & 0.16371 \\

\hline
\end{tabular}

& &

\begin{tabular}{| c | c |}
\hline
$\lambda$ & 5.5e-6 \\
$m_0$ & 0.0010 \\
$\phi$ & 0.02731 \\
\hline
Mean &  0.00098\\
Variance &  5.76e-5\\
Sparsity & 5 \\
\hline
KMB & 0.47565 \\
HSY & 0.2479 \\
AZO & 0.19227 \\
REGN & 0.06476 \\
DUK & 0.01942 \\

\hline
\end{tabular}

& &

\begin{tabular}{| c | c |}
\hline
$\lambda$ & 5.5e-6 \\
$m_0$ & 0.0012 \\
$\phi$ & 0.03645 \\
\hline
Mean &  0.00109\\
Variance & 6.41e-5 \\
Sparsity & 6\\
\hline
KMB & 0.43308 \\
HSY & 0.25649 \\
AZO & 0.18254 \\
REGN & 0.08895 \\
DUK & 0.02721 \\
THC & 0.01172 \\

\hline
\end{tabular}
\end{tabular}
\end{table}

\begin{figure}[t]
\begin{center}
\includegraphics[height=2in]{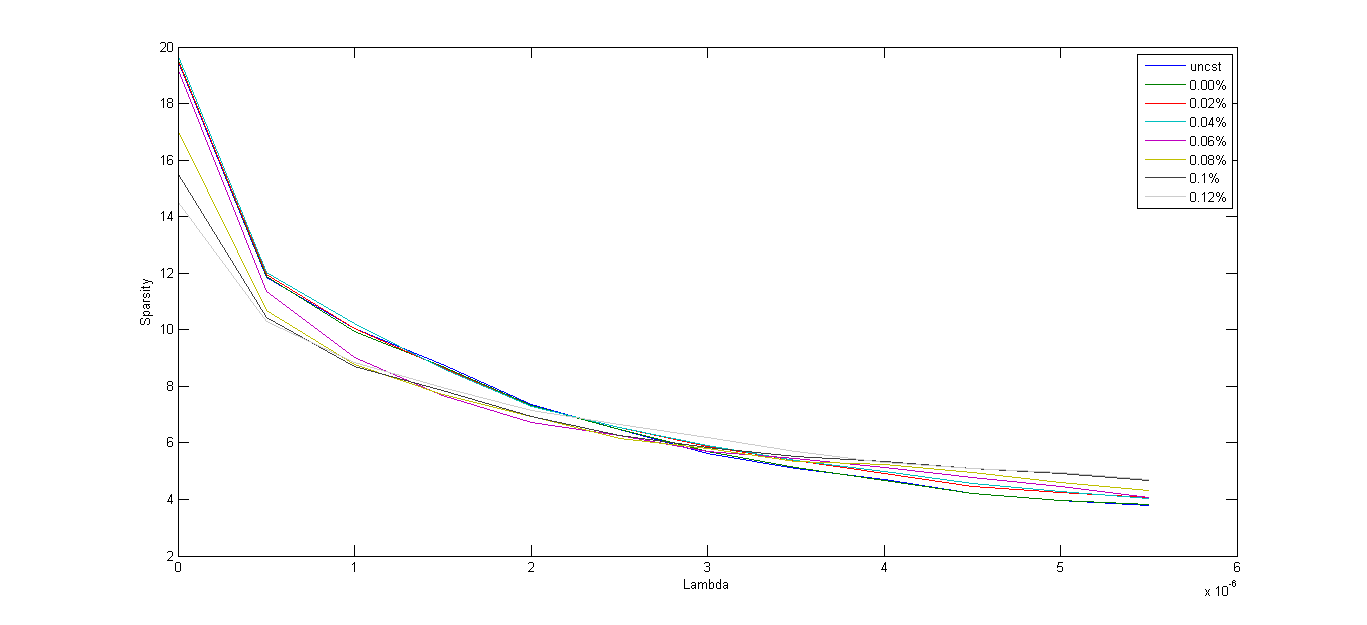}
\caption{Portfolio Sparsity}
\label{fig:portfoliosparsity}
\end{center}
\end{figure}

A comprehensive comparison of computational results between our $\ell_p$-norm regularized model and the cardinality constrained portfolio selection (CCPS) model are reported in the Table \ref{tab:No Shorting case}. As can be seen in the table, our regularized $\ell_p$-norm performs
almost as well as theoretical possible---the difference of the variance estimation between the two models are
within 0.2\% in all cases and the difference of the mean estimation are within 0.02\%. Therefore, compared to the computational intractable cardinality constrained portfolio optimization,
our $\ell_p$-norm regularized portfolio, which can be obtained in polynomial time, performs almost as well and seeks near optimal sparse portfolios.

\begin{table}[t]
 \renewcommand{\arraystretch}{1.2}
\centering
\caption{Comparison of Sparsity, Mean and Variance between the $\ell_p$-norm  Model and  CCPS (Daily Return).}
\vskip 0.15cm
\label{tab:No Shorting case}
\scriptsize
    \begin{tabular}{c c c c c c c c c}
\hline
~   & \multicolumn{4}{c}{$\ell_p$-norm} & ~ &\multicolumn{3}{c}{CCPS} \\ \cline{2-5} \cline{7-9}
~  & $\lambda$ & Sparsity & Mean & Variance &~ & Sparsity & Mean & Variance \\ \hline
$m_0 = 0.02\% \quad$ & 5.0e-7 & 9   & 0.05\%  & 4.45\% &~ &  9 & 0.05\% &4.46\% \\
~ & 1.0e-6 & 7   & 0.03\% & 3.68\% & ~ &7 & 0.04\% & 3.66\% \\
~& 2.0e-6 & 5 & 0.03\% & 3.90\% & ~ & 5 & 0.04\% & 3.75\% \\
~ &3.5e-6&4& 0.02\% & 4.08\% & ~ & 4 & 0.04\% & 3.89\% \\
~ &4.5e-6&3& 0.02\% & 4.12\% & ~ & 3& 0.04\% & 4.06\% \\
~&~&~&~&~&~&~&~&~ \\
$m_0 = 0.1\%\quad $ & 5.0e-7 & 10   & 0.06\%  & 4.57\% &~ &  10 & 0.05\% &4.58\% \\
~ & 1.0e-6 & 7   & 0.10\% & 4.90\% & ~ &7 & 0.10\% & 4.86\% \\
~& 2.0e-6 & 7 & 0.09\% & 5.27\% & ~ & 5 & 0.09\% & 5.37\% \\
~ &3.5e-6&6& 0.09\% & 5.18\% & ~ & 6 & 0.09\% & 5.28\% \\
~ &4.5e-6&6& 0.09\% & 5.18\% & ~ & 6& 0.09\% & 5.28\% \\ \hline
\end{tabular}
\end{table}

\subsubsection{Out-of-Sample Performance}
 \cite{Brodie:2009} show that sparse portfolios  are often more robust and thus  outperform the portfolios with less sparsity  in terms of out-of-sample performance. In their analysis,  the no-shorting constraint ($x \ge 0$) is taken as the \emph{most} extreme sparsity inducing measure. We continued this investigation by taking the no-shorting constraint as the \emph{least} extreme measure and adding  the  $\ell_p$-norm regularizer
onto the objective function. It is interesting to ask whether the sparsest portfolios will outperform other portfolio strategies with less sparsity.

Figure \ref{fig:portfolioreturns} and Figure \ref{fig:portfoliovariances} show the out-of-sample portfolio returns and variances obtained by the $\ell_p$-norm regularized Markowitz model with $\lambda$ ranging from 5.0e-7 to 5.5e-6 and the CCPS, respectively. From Figure \ref{fig:portfolioreturns}, we observe clearly that  most of the plots go up slightly and then achieve its maximum, indicating that the portfolios with moderate sparsity (around 10) perform very well, even better than the Markowitz portfolio. However, with the continuously increasing of sparsity, the mean will go down dramatically and thus the regularized portfolios with extreme sparsity perform poorly in the sense of portfolio mean. Figure \ref{fig:portfoliovariances} shows that the variance of the regularized portfolios is increasing with a incremental rate with the increasing sparsity of the portfolios. However, though the highly sparse portfolios performs poorly in the sense of portfolio variance, the intermediate portfolios with about 10 companies suffered a 15-25\% increase in variance which is also comparable to the CCPS integer portfolios.

\begin{figure}[t]
\begin{center}
\includegraphics[height=2in]{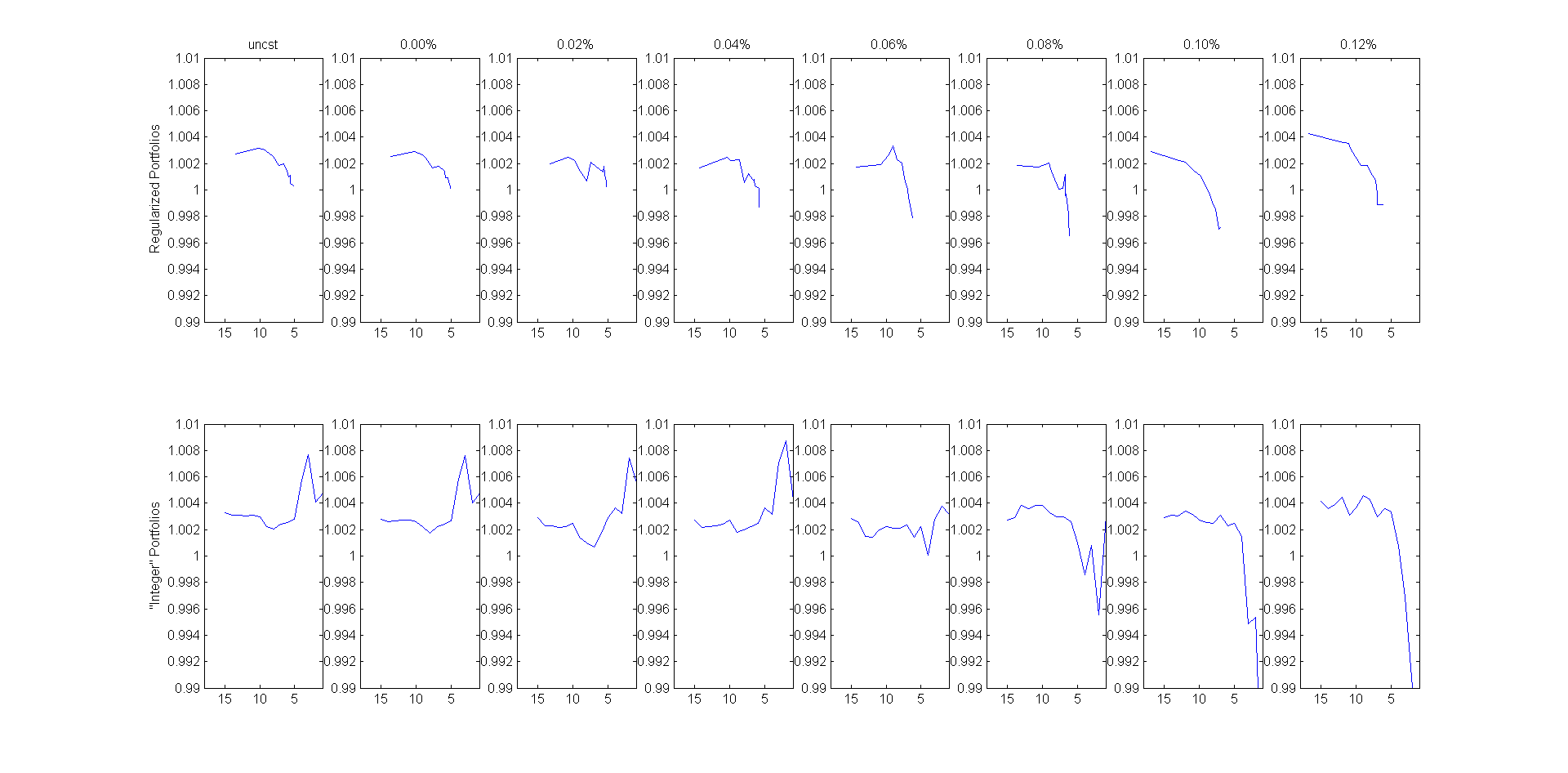}
\caption{Portfolio Returns}
\label{fig:portfolioreturns}
\end{center}
\end{figure}

\begin{figure}[t]
\begin{center}
\includegraphics[height=2in]{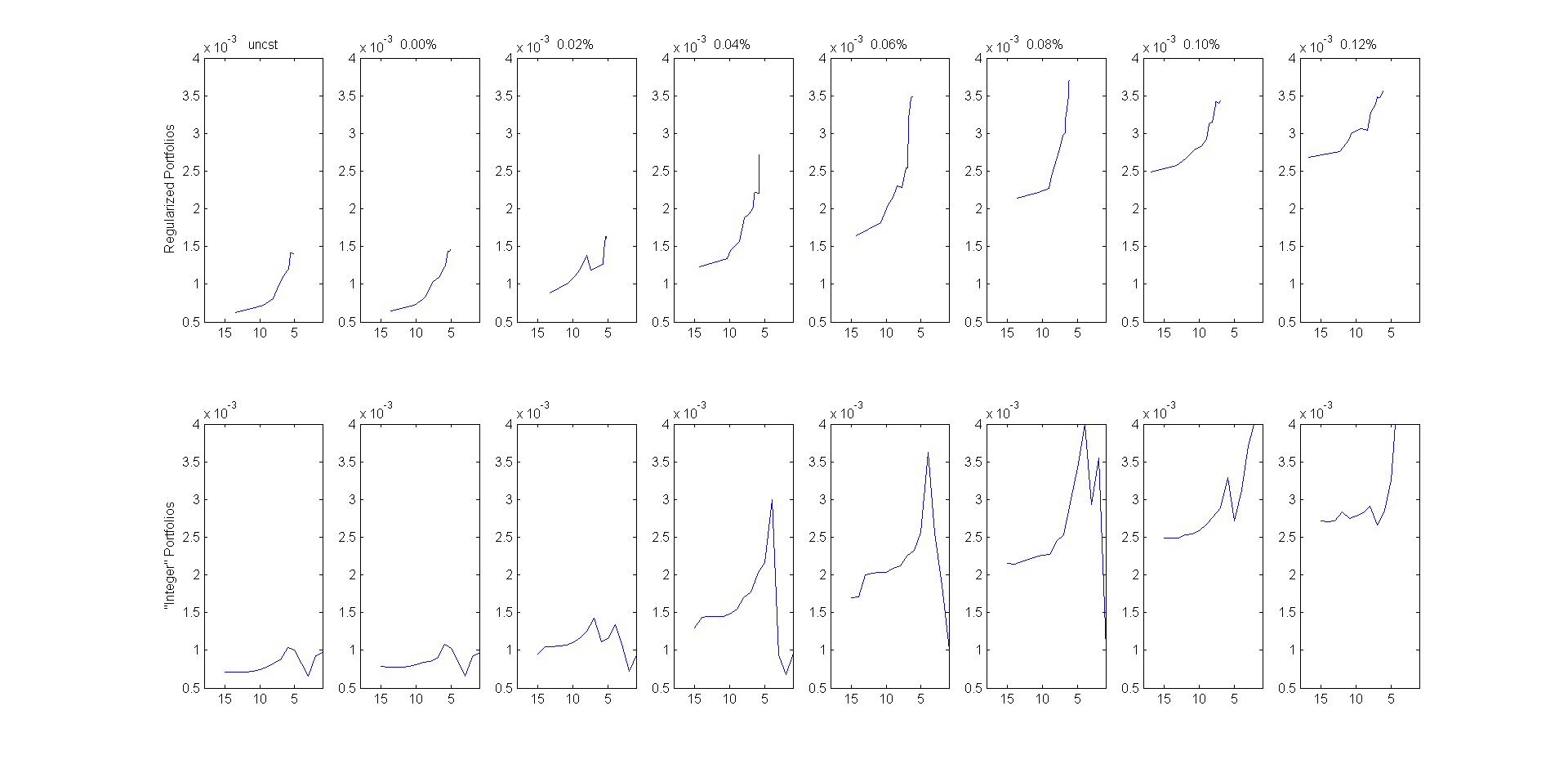}
\caption{Portfolio Variances}
\label{fig:portfoliovariances}
\end{center}
\end{figure}

\begin{figure}[t]
\begin{center}
\includegraphics[height=2in]{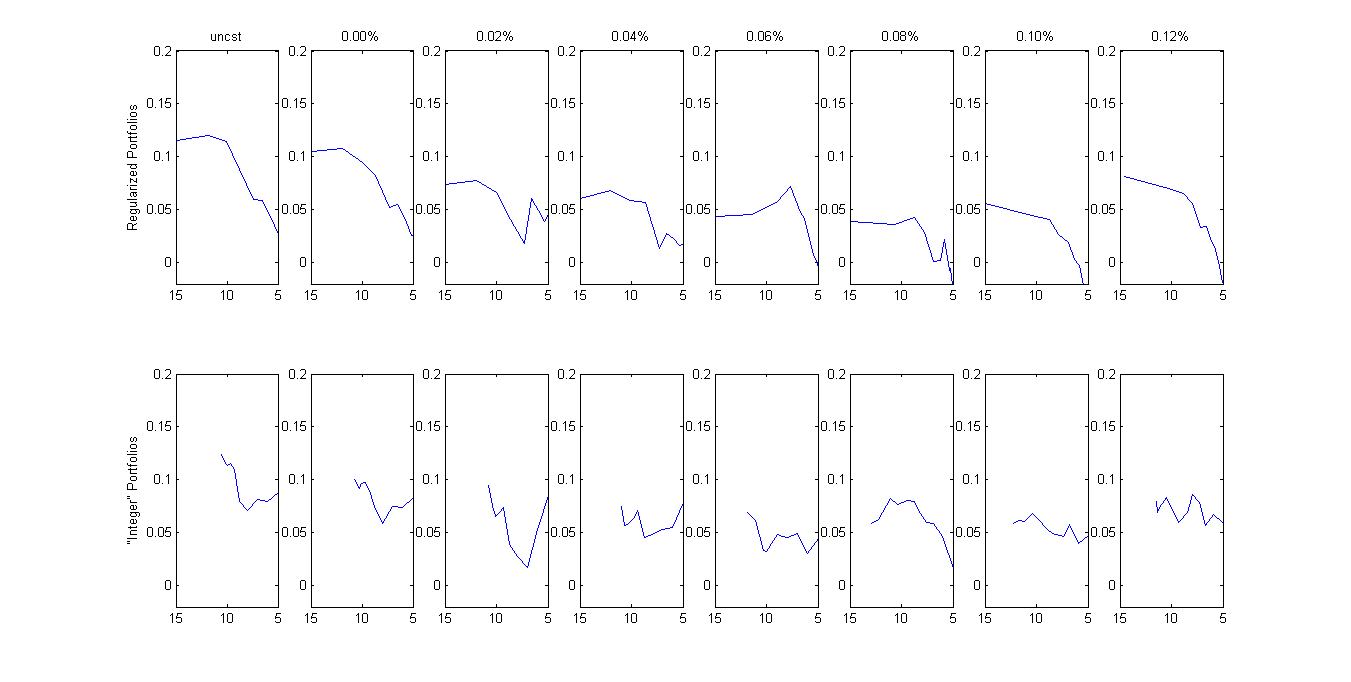}
\caption{Portfolio Sharpe Ratios}
\label{fig:sharp_stock}
\end{center}
\end{figure}

Figure \ref{fig:sharp_stock} shows the out-of-performance Sharpe ratios of our  $\ell_p$-norm
regularized portfolio and the CCPS integer portfolio. Although the Markowitz portfolio (with $\lambda=0$) outperforms our $\ell_p$-norm regularized model in terms of the out-of-sample Sharpe ratio, the sparse portfolios may be more implementable due to  the  transaction costs or logistical limitations reasons. Our results indicate that an intermediate sparse portfolio may get a comparable or at most only 10-20\% cost in Sharpe ratio while reducing more construction costs. Also, the $\ell_p$-norm regularized approach is competitive with the computationally gigantic integer approach in the sense of
out-of-sample performance.

\subsection{Shorting-Allowed Extension}
\label{sec:shortingallowedextension}

Next we relaxed our constraint to allow the short-selling of stocks. We compare our model \eqref{MV-lp-short} with the $\ell_1$-norm ball constrained portfolios studied by \cite{Demiguel:2009a}, as the strategy may find sparse portfolios with improved out-of-sample Sharpe ratios.

\subsubsection{ $\ell_p$-norm Regularized Model}
\label{sec:$p$-norm}
Figure \ref{fig:lambdasparsity}  shows that the shorting-allowed Markowitz portfolios behave eccentrically (also see Table \ref{tab:Shorting case}), with the portfolio including all the stocks no matter the choice of  the parameter $\phi$.  Meanwhile, our $\ell_p$-norm regularized model \eqref{MV-lp-short} is able to reduce the number of investing stocks drastically. For example, only 22 stocks are involved in the Markowitz regularized portfolio for $\lambda =1e-6$ and $m_0=0.06\%$, and thus there is a 95.2\% reduction of the portfolio size.  The parameter $\lambda$ can be regarded as a server to control the portfolio sparsity.

\begin{figure}[t]
\begin{center}
\includegraphics[height=2in]{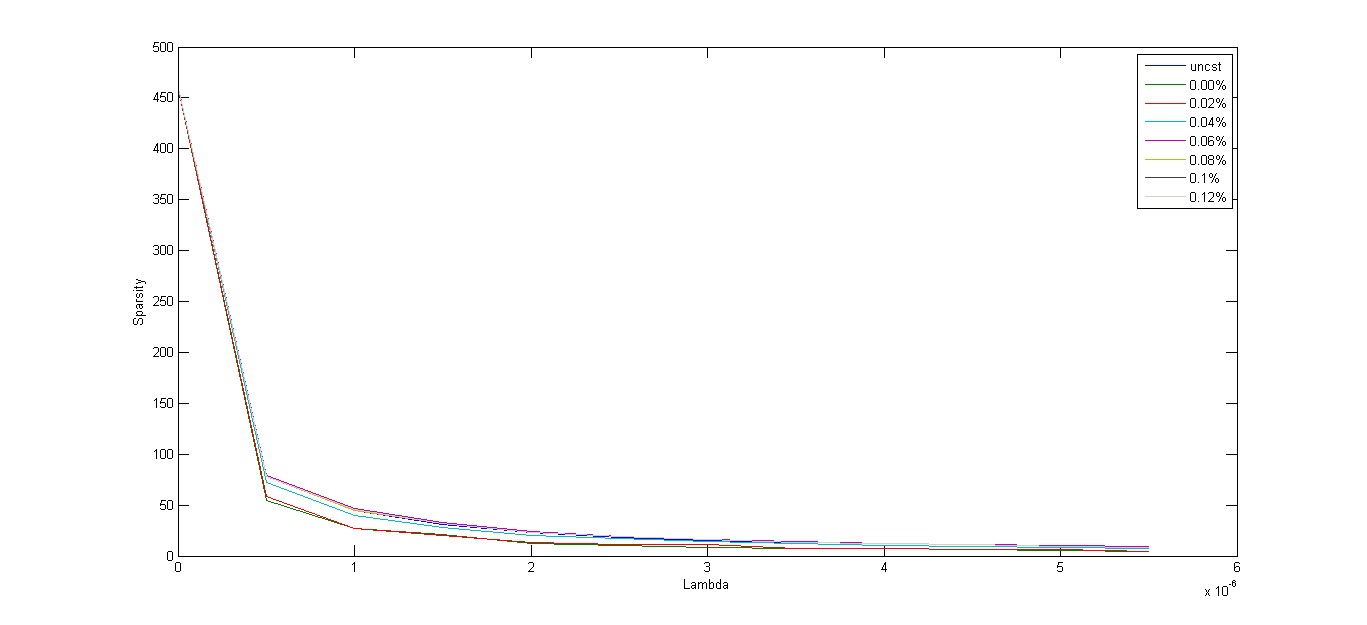}
\caption{Portfolio Sparsity}
\label{fig:lambdasparsity}
\end{center}
\end{figure}

The out-of-sample results are similar to the shorting-prohibited case.  From  Table \ref{tab:Shorting case}, we see that the Sharpe ratio tends to be the highest when $\lambda$ is not too large, and would decrease with the increasing of the parameter $\lambda$. However, even for a significantly small $\lambda$, the regularized portfolios are much more sparse (e.g. 79 versus 461), and of competitive or better performance  while compared with  the Markowitz portfolio.  For larger values of $\lambda$, there is a clear tradeoff between the  portfolio sparsity and performance.
\begin{table}[t]
 \renewcommand{\arraystretch}{1.2}
\centering
\vskip 0.15cm
\caption{Sharpe Ratio and Sparsity of shorting allowed $\ell_p$-norm regularized Model}
\label{tab:Shorting case}
\scriptsize
    \begin{tabular}{c c c c c c c c c c c c}
\hline
~ &  \multicolumn{2}{c}{$m_0 =-\infty$}  &  ~ &  \multicolumn{2}{c}{$m_0=0.02\%$} & ~ &
\multicolumn{2}{c}{$m_0 =0.06\%$} & ~ &   \multicolumn{2}{c}{$m_0 =0.1\%$}\\
\cline{2-3}\cline{5-6}\cline{8-9}\cline{11-12}
$\lambda$ & Spa & SRatio & ~ & Spa & SRatio & ~ & Spar & SRatio & ~ & Spar & SRatio\\ \hline
0	        &	461.0 	&	0.165        & ~ &	461.0 	&	0.161 	&~&	461.0 	&	0.146 	&~&	461.0 	&	0.127 	\\
5.0e-7	&	78.1 	&	0.156 	& ~ &	58.7  	&	0.161 	&~&	79.3 	&	0.161 	&~&	78.2	       &	0.166 	\\
1.0e-6	&	45.1 	&	0.125 	& ~ &	27.1 	&	0.120 	&~&	46.9 	&	0.123 	&~&	45.2 	&	0.120 	\\
2.0e-6	&	22.9   	&	0.159 	& ~ &	13.4 	&	0.159 	&~&	24.4 	&	0.159 	&~&	23.4 	&	0.155 	\\
2.5e-6	&	18.4	       &	0.149 	& ~ &	11.5 	&	0.152 	&~&	19.2 	&	0.147 	&~&	18.8 	&	0.150 	\\
3.5e-6	&	13.4 	&	0.120 	& ~ &	7.6 	        &	0.120 	&~&	14.3 	&	0.118 	&~&	13.4 	&	0.121 	\\
4.5e-6	&	10.9 	&	0.040 	& ~ &	6.6 	        &	0.036 	&~&	11.0	       &	0.040 	&~&	10.8 	&	0.041 	\\
5.5e-6	&	  8.5 	&	0.024 	& ~ &	4.9 	       &	0.023 	&~&	8.9	       &	0.024 	&~&	8.6     	&	0.027 	\\
\hline
\end{tabular}
\end{table}

\begin{table}[t]
 \renewcommand{\arraystretch}{1.2}
\centering
\caption{Sharpe Ratio and Sparsity of $\ell_1$-norm ball constrained Markowitz portfoliol}
\vskip 0.1cm
\label{tab:l1norm}
\scriptsize
    \begin{tabular}{c c c c c c c c c c c c}
\hline
~ &  \multicolumn{2}{c}{$m_0 =-\infty$}  &  ~ &  \multicolumn{2}{c}{$m_0=0.02\%$} & ~ &
\multicolumn{2}{c}{$m_0 =0.06\%$} & ~ &   \multicolumn{2}{c}{$m_0 =0.1\%$}\\
\cline{2-3}\cline{5-6}\cline{8-9}\cline{11-12}
$\delta$ & Spa & SRatio & ~ & Spa & SRatio & ~ & Spar & SRatio & ~ & Spar & SRatio\\ \hline										
1.5	&70.5	&	0.127 &~&	70.4	 &	0.109	&~&	68.3	      &	0.111 	&~&	60.9	&	0.149	\\
2	&118.1	&	0.163 &~&	118.6&	0.155 	&~&	116.3	&	0.177 	&~&	111.1	&	0.181	\\
\hline
\end{tabular}
\end{table}

\subsubsection{$\ell_1$-norm Ball Constrained Model}
\label{sec:$1$-norm}

For the purpose of comparison, we also post the results of $\ell_1$-norm ball constrained portfolios on the same data set. The
$\ell_1$-norm ball constrained model considered in this section takes the following form
\begin{equation}\label{eq:markowitzwith1normconstraint}
\begin{array}{rl}
\min & \frac{1}{2}x^TQx \\[0.2cm]
\mbox{s.t.}       &  e^Tx =1, \\[0.2cm]
 &  m^Tx \ge m_0, \\[0.2cm]
  &  \|x\|_1 \le \delta,\\[0.2cm]
\end{array}
\end{equation}
where $\delta \geq 1$. Figure \ref{fig:PortfoliosparsitywithDifferentDelta} shows the number of nonzero positions versus
the threshold parameter $\delta$  of the  $\ell_1$-norm ball constrained portfolios. It is clear
that the sparsity decreases at a fast speed with the increasing of  $\delta$. The left-most date points
corresponds to the shorting-allowed Markowitz model where the $\ell_1$-norm ball constraint is not effective while the right-most
data points ($\delta =1$) corresponds to the shorting-prohibited Markowitz mode where the $\ell_1$-norm ball constraints takes its
most effective role in inducing sparsity. However,  even when $\delta = 1$, the average sparsity of the portfolio is around 15, which is much more dense than the shorting-allowed portfolio with $\lambda$ smaller than 3.5e-6.

\begin{figure}[t]
\begin{center}
\includegraphics[height=2in]{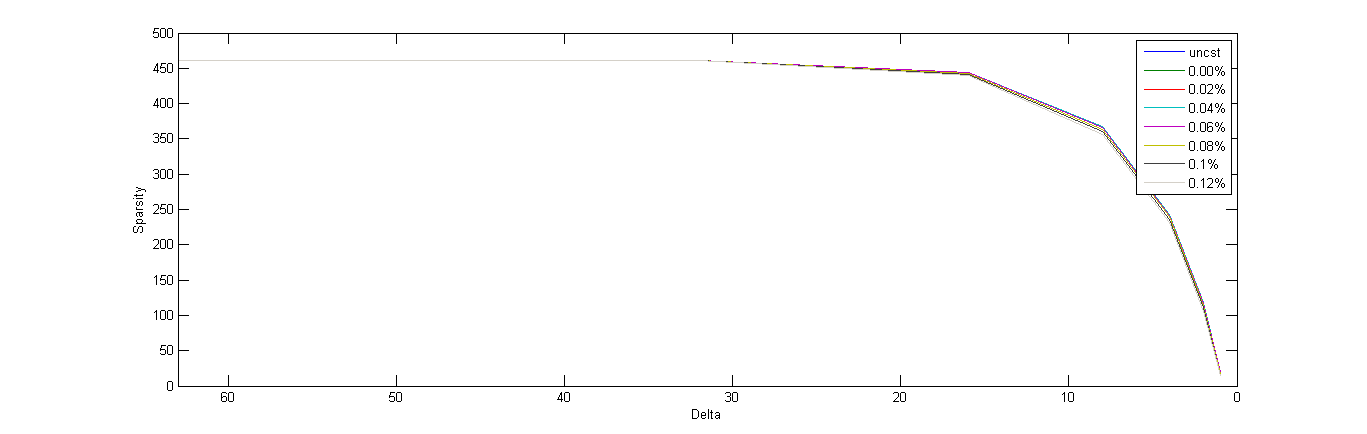}
\caption{Portfolio Sparsity with Different Delta of $\ell_1$-Norm Model}
\label{fig:PortfoliosparsitywithDifferentDelta}
\end{center}
\end{figure}

We can make  a detailed comparison between  $\ell_1$-norm ball constrained model and  $\ell_p$-norm regularization model by Table \ref{tab:Shorting case} and  \ref{tab:l1norm}.  It is easy to see that when $\delta$ equals 1.5 or 2, the out-of-sample performance of $\ell_p$-norm
models is similar to that of the $\ell_1$-norm ball constrained portfolios but the former is much more sparse.  However, the performance of the $\ell_p$-norm models is surpassed when $\delta$ is increased.  In that case,  the $\ell_1$-norm ball constrained portfolio  achieves a better out-of-sample performance with the sacrifice of sparsity, see Figure \ref{fig:PortfolioSharperatiowithsparsity}. Moreover, the largest out-of-sample Sharpe Ratio is achieved when $\delta \approx 16$. From this figure, we can also see that the shorting-allowed Markowitz Model (far left points) is  better than shorting-prohibited Markowitz Model (far right points).  This is consistent with the remark made by \cite{Jagannathan:2003} that when daily data is used, shorting-prohibited models perform almost as well.

\begin{figure}[t]
\centering
\includegraphics[scale=.36]{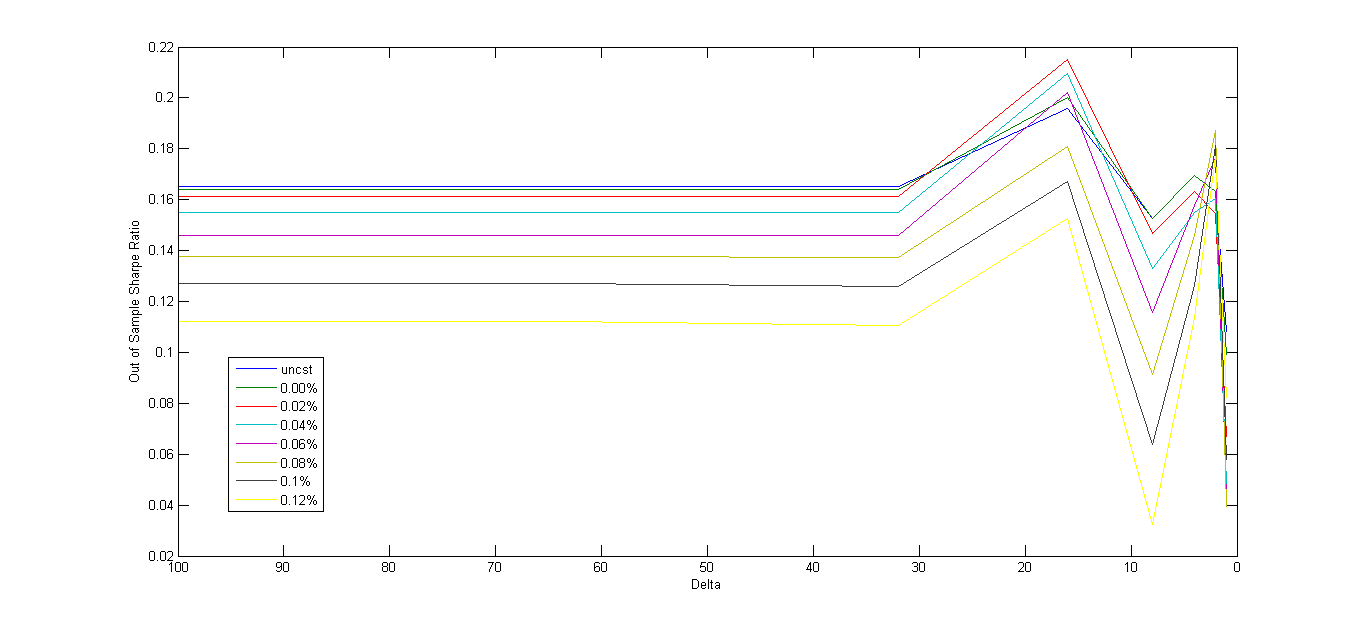}
\caption{Portfolio Sharpe Ratio with Sparsity}
\label{fig:PortfolioSharperatiowithsparsity}
\end{figure}

\subsection{$\ell_1$-norm Ball Constrained $\ell_p$-norm Regularized Model}

In the last two sections, we have discussed the  computational performance of the  $\ell_1$-norm ball constrained Markowitz model and the shorting-allowed $\ell_p$-norm regularized Markowitz models \eqref{MV-lp} and \eqref{MV-lp-short} individually. Next, we consider the $\ell_1$-norm ball constrained $\ell_p$-norm regularized model  \eqref{MV-lpreg-l1con} to investigate the relationship between the leverage (characterized by the $\ell_1$-norm), the sparsity (induced mostly by the  $\ell_p$-norm) and the out-of-sample performance. According to the results of the $\ell_1$-norm constrained model, we
solve our regularized model combined with the $\ell_1$-norm constraint with $\delta$ ranging from
 from 1.5 to 32 and $(\lambda,\phi)$ taking an array of values. This thorough approach are expected to give
 us a more structured picture of the relationship between the $\ell_1$- and $\ell_p$-norms as well as their relationship to the performance.

 Table  \ref{tab:ssratio}  reports the out-of-sample computational results for the cases where $\delta$ is taken as
 1.5, 2 and 32, and $m_0$ is set as $-\infty$ and 0.04\%. Since the $\ell_1$-norm constrained $\ell_p$-norm regularization enjoys the similar trend for different choices of $\delta$, we don't report the corresponding results for succinctness. From the table, we see clearly that the  sparsity, in general, is antagonistic to performance.  Thus, there exists a tradeoff between performance and sparsity. Though the performance varies for different values of $\lambda$ , a well performed portfolio can be obtained when $\lambda $ is smaller than 4.5e-06.
And when $\lambda$ is not very large, say less then $2e-6$, the sparsity need not come at a high price (of Sharpe ratio)
and there are many sparse portfolios with comparable performance to the portfolios found with $\lambda=0$. Also, we  find that with the increase of $\lambda$, the leverage of the resulting portfolio decreases significantly. Thus, it seems that the leverage of the portfolio is mostly determined by the choice of $\lambda$. Moreover,  we note that there appears to be little cross-effect between sparsity and leverage on performance.

\begin{table}[t]
 \renewcommand{\arraystretch}{1.2}
\centering
\caption{ Sparsity and Sharpe Ratio of the Combined Model and $\ell_1$ Norm model for S \& P Data with Three Month Estimation Window: Out-of-Sample Performance}
\vskip 0.1cm
\label{tab:ssratio}
\scriptsize{
\begin{tabular}{c c c c c c c c c c c c c}
\hline
~   & ~ &\multicolumn{3}{c}{$\delta=1.5$} & ~ &\multicolumn{3}{c}{$\delta=2$} & ~ &\multicolumn{3}{c}{$\delta=32$}  \\ \cline{3-5} \cline{7-9}\cline{11-13}
model/ $m_0$  & $\lambda$ & Spar & Leve & SRatio &~ & Spar & Leve & SRatio &~ &Spar & Leve & SRatio \\ \hline
$\ell_1-\ell_p/-\infty$ & 5.0e-7 & 37.9 &1.496 	&0.074&~ &	54.3 	&	1.949 	&	0.159 	&~&	94.2 	&	3.392 	&	0.214 	\\\
~ & 2.0e-6 & 16.8 	&1.340 	&	0.101 	&~&	23.6 	&	1.526 	&	0.103 	&~&	29.7 	&	1.806 	&	0.063 	\\
~& 4.5e-6 &  9.3 	&	1.202 	&	0.120 	&~&	11.9 	&	1.285 	&	0.139 	&~&	13.8 	&	1.376 	&	0.050 	\\
~ &8.0e-6& 5.5 	&	1.084 	&	-0.094 	&~&	6.8 	&	1.130 	&	-0.026&~ 	&	6.8	&	1.155 	&	-0.138 	\\
~ &1.25e-5 &4.1 	&	1.049 	&	-0.094&~&	4.4 	&	1.063 	&	-0.132&~ 	&	4.4	&	1.068 	&	-0.111 \\
$\ell_1$-norm/$-\infty$ & --- & 70.5  &	1.500 	&	0.234 	&~&	117.2 	&	2.000 	&	0.195 	&~&	461	&	23.571 	&	0.247  \\ ~&~&~&~&~&~&~&~&~ \\
$\ell_1-\ell_p/0.04\%$  &5.0e-7	&	35.1 	&	1.496 	&	0.074 	&~&	56.6 	&	1.948 	&	0.177 	&~&	94.4 	&	3.414 	&	0.226 	\\
~ &2.0e-6	&	16.3 	&	1.338 	&	0.114 	&~&	23.6 	&	1.523 	&	0.115 	&~&	29.8 	&	1.808 	&	0.086 	\\
~ &4.5e-6	&	9.6    &	1.197 	&	0.105 	&~&	12.0 	&	1.284 	&	0.138 	&~&	13.6 	&	1.378 	&	0.053 	\\
~ &8.0e-6	&	5.6 	&	1.088 	&	-0.132&~ &  6.8 	&	1.129 	&	-0.006 &~	&6.7 	&	1.143 	&	-0.162 	\\
~ &1.25e-5	&	4.1 	&	1.049 	&	-0.098 &~	&4.4 	&	1.063 	&	-0.133 &~	&4.3 	&	1.068 	&	-0.115 	\\
$\ell_1$-norm/$0.04\%$ &--- & 70.2 	&	1.500 	&	0.201&~ 	&	118.7 	&	2.000 	&	0.207 	&~&	461.0 	&	23.500 	&	0.234 \\
\hline
\end{tabular}}
\end{table}

\subsection{$\ell_2-\ell_p$-norm Double Regularized Model}

As mentioned in \cite{DeMiguel:2009}, the $\ell_2$-norm constraint can be viewed as placing a prior on the 1/N strategy, thus it is reasonable to expect the results close to the 1/N strategy. Yet, most investors would not  invest into a portfolio with huge number of stocks, which  motivates us to
develop a portfolio strategy with less stocks  but similar to the 1/N strategy with competitive out-of-sample performance, especially for those passive investors. For this purpose, it is natural to consider the $\ell_p$-norm  regularization of the $\ell_2$-norm constrained Markowitz model
\[
 \label{eq:lpl2model}
\begin{array}{rl}
\min & \frac{1}{2}x^TQx  + \lambda\|x\|_p^p \\[0.2cm]
\mbox{s.t.}       &  e^Tx =1, \\[0.2cm]
 &  m^Tx \ge m_0, \\[0.2cm]
  &  \|x\|_2^2 \le \delta^2, \\[0.2cm]
\end{array}
\]
or its Lagranagin version (double regularization Markowitz model \eqref{MV-lp-l2}) to see  if we can obtain a portfolio that balances sparsity and uniform prior. The results of the $\ell_2-\ell_p$ model are shown in
Table \ref{tab: 2 and p combined sp}, with different choices of $\lambda$ and $\delta$. The parameters
$\mu$ and $\phi$ in the regularized model \ref{MV-lp-l2} are obtained from the dual variables of problem
\eqref{eq:lpl2model} with $\lambda =0$. Seen from the result,  the optimal portfolio obtained by
the double regularization formulation would include  all the stocks in the case that $\lambda =0$ for all values of $\delta$, closely related to the 1/N strategy. Also, the portfolio becomes more sparse
with the increasing of $\lambda$ and  fixed $\delta$, while more dense with the increasing of $\delta$
and fixed $\lambda$ . This trend shows a tradeoff between  $\ell_p$-norm regularization and $\ell_2$-norm
ball constraints.

 It is also note that the strategy to invest all stocks doesn't usually perform best  in the sense of Sharpe ratio. For example, in the case that $\lambda$=1.25e-5, $\delta$=0.1 and $m_0$=0.08\%, we can find a portfolio with only 135 stocks yet with a high Sharpe ratio 0.575, which is much better than the Sharpe ratio 0.374 attained with $\lambda=0$. Similar as the observation before, the extremely sparse portfolio often performs
poorly showing a tradeoff between sparsity and performance.

Also, the most constricting delta ($\delta=0.1$) had the highest performing portfolios, suggesting that the presence of a strong uniform prior on all stocks helps mitigate overfitting due to poor variance/covariance estimates.  The out-of-sample performance was \emph{increasing} in $\lambda$ when $\lambda$ was not too large.
These moderately sparse, highly $\ell_2$-norm constricted portfolios performed excellently (all had Sharpe Ratio near or above 0.5).  Thus the $\ell_2$ and $\ell_p$ norms appear to exhibit synergy in reducing overfitting.

Table \ref{tab: 2 and p combined Inter} lists the out-of-sample computational results of our $\ell_2-\ell_p$ double regularization model for international data with much more diversity. Compared with the results for S \& P data, the overall performance is greatly enhanced, especially for the sparsest portfolios. Very  surprisingly,
we even find that a portfolio with two stocks perform quite well. And also  we see that the cost of sparsity
need not be high even for very sparse portfolios if the stock base is favorable.

\begin{table}[t]
 \renewcommand{\arraystretch}{1.2}
\centering
\caption{Sparsity and Sharpe Ratio of the $\ell_2-\ell_p$-Norm Double-Regularization Model for S\& P Data with Three Month Estimation Window}
\vskip 0.15cm
\label{tab: 2 and p combined sp}
\scriptsize
    \begin{tabular}{c c c c c c c c c c c c c}
\hline
~   & ~ &\multicolumn{2}{c}{$\delta=0.1$} & ~ &\multicolumn{2}{c}{$\delta=0.2$} & ~ &\multicolumn{2}{c}{$\delta=0.3$}& ~ &\multicolumn{2}{c}{$\delta=0.4$} \\ \cline{3-4} \cline{6-7}\cline{9-10} \cline{12-13}
~ & $\lambda$ & Spar & SRatio&~ & Spar & SRatio & ~& Spar & SRatio&~& Spar &SRatio \\ \hline
$m_0 =0.00\%$ \quad &   0  & 461.0& 0.409 &~& 461.0 & 0.18 &~&461.0	& 0.087 & ~&461.0 & 0.035\\
 &  5.0e-7  & 321.8	&0.431 & ~ &185.8	&0.247&~&	133.0	&0.264	&~&	116.7&0.214\\
 & 2.0e-6 &236.8	&0.414&~	&	69.8	&0.35	& ~&	50.3	&0.29 &~ &33.3	&0.231\\
& 4.5e-6 &	164.4&	0.504	&~&	34.0	&0.422	&~&	34.3	&0.248&~	&15.8&0.14\\		
&8.0e-6 &166.6	&0.498&~	&	20.2	&0.336&~	&	13.5	&0.237&~	&	8.1& 0.096\\
&1.25e-5&105.3	&0.536&~	&	15.1	&0.288&~	&	7.6	&0.096&~	&	5.0 &	-0.083\\
\\
$m_0 =0.04\%$ \quad &   0  & 461.0	    &0.389	&~&	461.0 &  0.181&~	&	461.0	&0.085	&~    &461.0	&	0.046\\
 &  5.0e-7  &321.8   & 0.405	&~&	181.3	&   0.25&~	&	127.0 	&    0.236     &~	&111.8    &   0.233	\\
 & 2.0e-6 &233.3   &  0.385	&~&	63.8	&    0.374&~	&	46.2 	&   0.27        &~	&36.9	&  0.251\\	
& 4.5e-6 &189.6   &  0.414	&~&	34.8	&    0.43 &~	&	26.5	&  0.258	&~	&22.3 &  0.147\\
&8.0e-6 &188.9 	&  0.52	&~&	20.9	&   0.376&~	&	11.3	&  0.201	&~	&8.2  &  0.122\\
&1.25e-5&132.5	&  0.553	&~&	15.3	&  0.31&~	&	7.8	&  0.087	&~	& 4.9  & -0.082\\
\\
$m_0 =0.08\%$ \quad &   0  &460.9	&  0.374	&~&	461.0	 &  0.183	&~&	461.0	 &  0.089&~	&  461.0       &  0.052\\	
 &  5.0e-7  &343.7	&  0.376	&~&	198.3  &  0.286	&~&	144.7	&  0.246&~	&  120.6  &  0.243\\
 & 2.0e-6 &254.9	&  0.366	&~&	73.0	  &  0.39	&~&	63.4  	&  0.212&~	&  32.6    &  0.222\\
& 4.5e-6 &179.6	&0.379	&~&	35.4	&0.421	&~&	25.6	&0.288&~	&27.0&0.212\\
&8.0e-6 &165.3	&0.444	&~&	34.5	&0.295	&~&	14.2	&0.194&~	&8.2&0.114\\
&1.25e-5&134.6	&0.575	&~&	16.1	&0.265	&~&	7.8	&0.097&~	&5.1&-0.079\\
\hline
\end{tabular}
\end{table}

\begin{table}[t]
 \renewcommand{\arraystretch}{1.2}
\centering
\caption{Sparsity and Sharpe Ratio of the $\ell_2-\ell_p$-Norm Double-Regularization Model for  International Data with Three Month Estimation Window}
\vskip 0.15cm
\label{tab: 2 and p combined Inter}
\scriptsize
    \begin{tabular}{c c c c c c c c c c c c c}
\hline
~   & ~ &\multicolumn{2}{c}{$\delta=0.1$} & ~ &\multicolumn{2}{c}{$\delta=0.2$} & ~ &\multicolumn{2}{c}{$\delta=0.3$}& ~ &\multicolumn{2}{c}{$\delta=0.4$} \\ \cline{3-4} \cline{6-7}\cline{9-10} \cline{12-13}
~ & $\lambda$ & Spar & SRatio&~ & Spar & SRatio & ~& Spar & SRatio&~& Spar &SRatio \\ \hline
$m_0 =0.00\%$ \quad &   0& 749.8	&0.569&~	&	750.0 &0.528	&	~&	749.8	&0.509&~	&		749.9&0.497	\\
 &  5.0e-7  &265.0	&0.611&~	&	84.8	&0.477&~	&	60.8	&0.443&~	&	31.6&0.439	\\
& 2.0e-6 &109.0	&0.58	&~&	23.4	&0.452&~	&	14.9	&0.439	&~&		8.1&		0.419	\\
& 4.5e-6 &65.7	&0.578&~	&	12.3	&0.461&~	&	5.1	&0.417&~	& 3.8&0.435\\	
&8.0e-6 &60.4	&0.602&~	&	7.67	&0.462&~	&	3.4	&0.415&~	&2.4 &0.433\\	
&1.25e-5&32.2	&0.62	&~&	5.4	&0.407&~	&	2.6	&0.424&~	&	2.25& 0.433\\	
\\
$m_0 =0.04\%$ \quad &0&  749.9	&0.615&~	&	749.9	&0.554&~	&	750.0	&0.524&~	&	750.0&0.508\\
 &  5.0e-7  &270.8	&0.619&~ &	83.8&	0.485	&~&	48.7	&	0.445&~	&32.4&	0.437\\
& 2.0e-6  &115.0   &0.606	&~ &24.5&	0.459	&~&	12.0	&	0.441&~	&8.2&	0.418\\
& 4.5e-6 &67.3   &0.599	&~&	12.58	&0.46 &~&	5.3	&	0.416&~	&3.8&	0.429\\
&8.0e-6 &46.5&0.621	&~&	8.0	&0.456&~&  3.4	&	0.413	&~      &2.4&	0.434	\\
&1.25e-5&33.5&0.628	&~&	5.5	&0.404& ~  &2.6	&	0.423	&~&	2.3&	0.433\\

\\
$m_0 =0.08\%$ \quad &0& 749.7&0.637&~	&	750.0&	0.588&~		&	749.8&	0.544	&~&	749.9&	0.524	\\
 &  5.0e-7  &291.8&0.649&~	&	88.2	&	0.486&~&	49.3	&	0.446&~&	38.0&	0.439\\
& 2.0e-6 &128.6&0.692	&~&	26.3	&	0.467&~&	12.3	&	0.434&~&	8.3&	0.417\\
& 4.5e-6 &87.3&0.631&~&	13.1	&	0.455&~&	5.3&	0.414&~		&	4.17&	0.429\\
&8.0e-6 &52.8&0.623&~	&	8.5	&	0.461&~&	3.5&	0.41	&~&	2.42&	0.434 \\				
&1.25e-5&38.8&	0.625	&~&	5.8&	0.4&~		&	2.6	&	0.4&~&	2.3	&	0.433\\		
\hline
\end{tabular}
\end{table}

\section{Discussions and Conclusions}
\label{sec:discussionsandconclusions}
\setcounter{equation}{0}
\medskip

\subsection{ $\ell_p$-norm regularized Dynamic Portfolios}
A closely related application to our model is the dynamic portfolio selection. Instead of seeking a sparse portfolio, we are looking for a sparse \emph{adjustment} to an already existing portfolio.  Consider the following cardinality constrained optimization model.
\begin{equation}\label{dynamic}
\begin{array}{rl}
\min & \frac{1}{2}x^TQx-c^Tx\\[0.2cm]
\mbox{s.t.}          &                  e^Tx            =1\\[0.1cm]
       &                       x             \ge 0\\[0.1cm]
       &            \|x-a\|_0             \le K,
\end{array}
\end{equation}
Here the $a$-vector is a feasible portfolio ($e^Ta = 1$ and $a \ge 0$), representing the current state of our dynamic portfolio.  Similar to the Markowitz model, the dynamic portfolio has found many applications.  One is the situation where implementing the portfolio takes a significant amount of time (perhaps we must execute our orders sequentially with long delays in-between) and we wish our first orders to constitute an near-optimal portfolio.  Another is the situation where our estimates $Q$ and $c = \phi m$ are themselves varying over time, enough to warrant a re-balancing, yet we still have limits on trading---either due to transaction costs or structural limitations.

This model has a non-differentiable point in the middle of the feasible region ($x=a$), but can be reformulated (by substitution: $y=x-a$) to achieve a model very similar to the non-dynamic sparse portfolio model:
\begin{equation}\label{dynamic2}
\begin{array}{rl}
\min & \frac{1}{2}y^TQy+Qa^Ty-c^Ty\\
\mbox{s.t.}     &                  e^Ty            =0\\
                     &                       y             \ge -a\\
                    &            \|y\|_0               \le K,
\end{array}
\end{equation}
We note that the objective function is still a quadratic function, and that the constraints are also of the same shape.  Instead of solving the original model \eqref{dynamic2}, we consider the following $p$ norm regularized dynamic Markowitz model
\begin{equation}\label{dynamiclp}
\begin{array}{rl}
\min & \frac{1}{2}y^TQy+(a^T Q-c^T)y +\lambda \|y\|_p^p\\[0.3cm]
\mbox{s.t.}          &                  e^Ty            =0,\\[0.2cm]
       &                       y             \ge -a.
\end{array}
\end{equation}
By letting $y=y^+ -y^-$ and using the concavity of $\|\cdot\|_p^p$, we know the regularized model
\eqref{dynamiclp} can be equivalently written as
\begin{equation}\label{dynamiclp1}
\begin{array}{rl}
\min & \frac{1}{2}(y^+-y^-)^TQ(y^+-y^-)+(a^TQ-c^T)(y^+-y^-) +\lambda \|y^+\|_p^p+\lambda\|y^-\|_p^p\\[0.2cm]
\mbox{s.t.}          &        e^Ty^+-e^Ty^-        =0,\\[0.1cm]
                          &          y^+ -y^-      \ge -a,\\[0.1cm]
                         & y^+\geq 0, \,y^-\geq 0,
\end{array}
\end{equation}
which can be further simplified to the following model
\begin{equation}\label{dynamiclp2}
\begin{array}{rl}
\min & \frac{1}{2}(y^+-y^-)^TQ(y^+-y^-)+(a^TQ-c^T)(y^+-y^-) +\lambda \|y^+\|_p^p+\lambda\|y^-\|_p^p\\[0.2cm]
\mbox{s.t.}          &                  e^Ty^+-e^Ty^-        =0,\\[0.1cm]
                          &                    y^+\geq 0, \,\,\,0\leq y^-   \leq a,
\end{array}
\end{equation}
Similar as the non-dynamic $\ell_p$- norm portfolio model, this resulting $\ell_p$-norm model can also be solved by the second order interior interior
point method.

\subsection{Conclusions}

In this paper, we propose an $\ell_p$-norm regularized model with/without shortsale constraints to seek near-optimal sparse portfolios
to  reduce the complexity of portfolio implementation and management. We also study the impact of the $\ell_1$ and $\ell_2$ norms and their cross-effects on overfitting. Theoretical results is established to guarantee the sparsity of the
 novel portfolio strategy. Computational evidence also clearly shows that the $\ell_p$-norm regularized portfolio is able to
choose sparsity with completely flexibility while still maintaining satisfactory out-of-sample performance---comparable to that of the NP cardinality-constrained portfolios.  

We find that the $\ell_1$-norm can be viewed as a prior on the optimal level of portfolio leverage;  a small $\ell_1$-penalty can improve performance. The $\ell_1$ norm greatly reduces the feasible region helping algorithms converge quickly.  It also is shown to be synonymous with leverage---a very important financial term and quantity of great theoretical interest.  

Meanwhile the $\ell_2$-norm can be viewed as a prior on the estimated covariances; we find that a large $\ell_2$-penalty can greatly improve performance, It also could improve tractability by bounding the feasible region.  And $\ell_2$-norm and the $\ell_p$-norm have positive cross-effects on performance---the combined model consistently portfolios outperformed all others.
  
Generally, when we do not pursue the most sparse portfolio,then the cost of sparsity is low---especially when the original portfolio of stocks is diverse.  And our research provides a toolset to evaluate the tradeoffs between sparsity and out-of-sample performance.

Our models also importantly provide a theoretical framework.  In this framework, sparsity can be studied in relation to leverage, correlation, Sharpe-Ratio and financial theory, where both practical bounds and qualitative insights can be made.  
\section{Appendix}
\setcounter{equation}{0}
\medskip
\subsection {Appendix I: Proofs of the Propositions}
\medskip
{\bf  Proof of Theorem \ref{secondordertheorem}.} Since the second-order necessary condition of \eqref{MV-lp} holds at the point $\bar{x}$, the sub-Hessian matrix of the objective function corresponding to the indices $\bar{P}$
\[
\bar{Q}-\frac{\lambda}{4}\bar{X}^{-3/2} \succeq 0
\]
on the null space of $e$. This means the projected Hessian matrix
\[ \left(I-\frac{1}{K}ee^T\right)
\left(\bar{Q}-\frac{\lambda}{4}\bar{X}^{-3/2}\right)
\left(I-\frac{1}{K}ee^T\right)
\]
is positive semidefinite. By direct calculation, we know that the $i$th diagonal entry of the projected Hessian matrix is given by
\begin{equation}\label{diagonalentry}
L_i -\frac{\lambda}{4}\left( (\bar{x}_i)^{-3/2}\left(1-\frac{2}{K}\right)+\frac{\sum_{j\in \bar{P}}(\bar{x}_j)^{-3/2}}{K^2}\right)\ge 0,
\end{equation}
and also the trace of projected Hessian matrix
\[\sum_{i\in \bar{P}}L_i-\frac{\lambda}{4}\frac{K-1}{K}\sum_{i\in \bar{P}}(\bar{x}_i)^{-3/2}\ge 0.\]
The quantity $\sum_{i\in \bar{P}}(\bar{x}_i)^{-3/2}$, with $\sum_{i\in \bar{P}}\bar{x}_i=1$,
achieves its minimum at $\bar{x}_i=1/K$ for all $i\in \bar{P}$ with the minimum value $K\cdot K^{3/2}$. Thus,
\[\frac{\lambda}{4}(K-1)K^{3/2}\le \sum_{i\in \bar{P}}L_i,\]
or
\[(K-1)K^{3/2}\le \frac{4\sum_{i\in \bar{P}}L_i}{\lambda},\]
which complete the proof of the first claim. Moreover, from (\ref{diagonalentry}) we have
\[\frac{\lambda}{4}\left( (\bar{x}_i)^{-3/2}\left(1-\frac{2}{K}\right)+\frac{\sum_{j\in \bar{P}}(\bar{x}_j)^{-3/2}}{K^2}\right)\le L_i.\]
Or
\[\frac{\lambda}{4}\left( (\bar{x}_i)^{-3/2}\left(1-\frac{1}{K}\right)^2+\frac{\sum_{j\in \bar{P},j\ne i}(\bar{x}_j)^{-3/2}}{K^2}\right)\le L_i,\]
which implies
\begin{equation}\label{bound1}
\frac{\lambda}{4}(\bar{x}_i)^{-3/2}\left(1-\frac{1}{K}\right)^2\le L_i.
\end{equation}
Hence, if $L_i=0$, we must have $K=1$ so that $\bar{x}_i$ is the only non-zero entry in $\bar{x}$ and $\bar{x}_i=1$. Otherwise, from \eqref{bound1}, we have the desired second statement in the theorem.

\medskip

\noindent {\bf Proof of Theorem \ref{secondordertheorem1}.}）\,\,(i) Assume the contrary that $\bar{P}^+ \cap \bar{P}^-\neq \emptyset$.
Then there exists an index $j$ such that $\bar{x}_j^+>0$ and $\bar{x}_j^->0$. Let $\lambda_1$ and $\lambda_2\, (\leq 0)$ be the optimal Lagrangian multiplier associated with the constraints of \eqref{MV-lpreg-l1con}. Since $(x^+, x^-)$ is a KKT point of  \eqref{MV-lpreg-l1con}, it holds that
\begin{equation}
\left\{
\begin{array}{c}
\displaystyle \left[Q(\bar{x}^+ -\bar{x}^-)\right]_i - c_i + {\lambda\over 2\sqrt{(\bar{x}^+)_i}} -\lambda_1 -\lambda_2 =0\\
\displaystyle \left[Q(\bar{x}^- -\bar{x}^+)\right]_i + c_i + {\lambda\over 2\sqrt{(\bar{x}^-)_i}} +\lambda_1 -\lambda_2 =0
\end{array}
\right..
\end{equation}
By adding the two equalities above, we have
\begin{equation}\label{contrary}
 {\lambda\over 2\sqrt{(\bar{x}^+)_i}}   + {\lambda\over 2\sqrt{(\bar{x}^-)_i}}  - 2\lambda_2 =0.
\end{equation}
However, since $(\bar{x}^+)_i >0,\,\,  (\bar{x}^-)_i >0$ and $\lambda_2\leq 0$, the equality \eqref{contrary}
cannot hold. This contradiction shows that $\bar{P}^+ \cap \bar{P}^- \neq \emptyset$.
(ii,iii) Since the proof of the remainder parts of this theorem is similar to that of Theorem 1, we omit the details.

\medskip

\noindent {\bf Proof of Theorem \ref{secondordertheorem2} .}）\,\,The proof of this theorem is similar to that of Theorem 1. We omit the details.

\medskip

\subsection{Appendix II: Polynomial Time Interior Point Algorithms}
Most nonlinear optimization solvers can only guarantee to
compute a first-order KKT solution. In this section, we extend the interior-point algorithm
described in \cite{Bian:2012} to solve the following generally $\ell_p$-norm regularized model
\begin{equation}\label{MV-lp-general}
\begin{array}{rl}
\min &\displaystyle  f(x):=\frac{1}{2}\,\,x^TQx - c^Tx + \lambda\|x\|^p_p \\[0.3cm]
\mbox{s.t.}       &                  Ax            =b,\\[0.1cm]
       &                       x             \ge 0,
\end{array}
\end{equation}
where $A$ is a matrix in $\Re^{p\times n}$, $b$ is a vector in $\Re^p$ and the feasible region is strictly feasible. For simplicity, we fix $p= {1\over 2}$.

Naturally, we would start from an interior-point feasible
solution such as the analytical  of the feasible set, and let the iterative
algorithm to decide which entry goes to zero.
This is the basic idea of affine scaling algorithm developed in \cite{Bian:2012} for regularized nonconvex programming. The algorithm starts from an initial interior-point solution, then follows an interior feasible path and finally converges to either a global minimizer or a second-order KKT solution. At each step, it chooses a new interior point which produces a reduction to the objective function by an affine-scaling trust-region iteration.

Specifically, give an  interior point $x^k$ of the feasible region, the algorithm looks for an objective reduction by a update from $x^k$ to $x^{k+1}$. Let $d^k$ be a vector in $\Re^p$ satisfying $Ad^k=0$ and $x^{k+1}:=x^k+d^k>0$.
Using the second Taylor expansion of $f(\cdot)$, we know
\[
f(x^{k+1}) \approx
f(x^k)+\frac{1}{2}(d^k)^T\big(Q-\frac{\lambda}{4}(X^k)^{-3/2}\big) d^k
+\big(Qx^k-c+\frac{\lambda}{2\sqrt{x^k}}\big)^Td^k,
\]
where $X^k = {\rm Diag}(x^k)$. For given $\varepsilon\in (0,1]$, we solve the ellipsoidal trust-region constrained problem
\[
\begin{array}{rl}
\min  &  \displaystyle \frac{1}{2}(d^k)^T\big(Q-\frac{\lambda}{4}(X^k)^{-3/2}\big) d^k
+\big(Qx^k-c+\frac{\lambda}{2\sqrt{x^k}}\big)^Td^k\\[0.3cm]
\mbox{s.t.}   &  Ad^k =0, \\[0.2cm]
                          &   \|X_k^{-1}d^k\|^2\leq \beta^2\varepsilon<1,
\end{array}
\]
to obtain the direction $d^k$. By letting $\tilde{d^k}=X_k^{-1}d^k$, we can recast the above
ellipsoidal trust-region constrained problem above as a ball-constrained quadratic problem
\begin{equation}\label{ipa}
\begin{array}{rl}
\min & \displaystyle  \frac{1}{2}(\tilde{d}^k)^TX^k\big(Q
-\frac{\lambda}{4}(X^k)^{-3/2}\big)X^k\tilde{d}^k+\big(Qx^k-c+
\frac{\lambda}{2\sqrt{x_k}}
\big)^TX^k \tilde{d}^k,\\[0.3cm]
\mbox{s.t.}&   AX_k\tilde{d}^k  =0, \\[0.1cm]
                          &   \|\tilde{d}^k\|^2 \leq \beta^2\varepsilon.
\end{array}
\end{equation}
Note that problem (\ref{ipa}) can be solved efficiently  even when it
is nonconvex (see \cite{Bian:2012}).

Let
$\widetilde{Q}^k = X_kQX_k -\frac{\lambda}{4}\sqrt{X^k}$
and
$\tilde{c}^k = X_k(Qx^k -c) + \frac{\lambda}{2}\sqrt{x^k}$.
If $\widetilde{Q}_k$ is semidefinite, the solution $\tilde{d}^k$ of problem \eqref{ipa} satisfies the following necessary and sufficient conditions:
\begin{equation}\label{KKTdef}
\left\{
\begin{array}{l}
(\widetilde{Q}^k+ \mu_kI)\tilde{d}^k - (AX^k)^Ty_k = -\tilde{c}^k,\\[0.1cm]
A X^k\tilde{d}^k =0,\\[0.1cm]
\mu_k\geq 0,\,\|\tilde{d}^k\|^2\leq \beta^2\varepsilon,\, \mu_k(\|\tilde{d}^k\|^2- \beta^2\varepsilon)=0.
\end{array}\right.
\end{equation}
In the case that $\widetilde{Q}^k$ is indefinite, it holds that
\begin{equation}\label{KKTindef}
\left\{
\begin{array}{l}
(\widetilde{Q}^k+ \mu_kI)\tilde{d}^k - (AX^k)^Ty_k = -\tilde{c}^k,\\[0.1cm]
A X_k\tilde{d}^k =0,\\[0.1cm]
\mu_k\geq 0,\, N_k^T\widetilde{Q}^kN_k +\mu_kI \succeq 0,\\[0.1cm]
\|\tilde{d}^k\|= \beta\sqrt{\varepsilon},
\end{array}\right.
\end{equation}
where $N_k$ is an orthogonal basis spanning the space of $X^kA^T$.

To evaluate the performance of the affine scaling method, we need  the definitions of $\varepsilon$ scaled first-order and second-order KKT solutions. $x^*$ is said to be an $\epsilon$ scaled first-order KKT solution of
\eqref{MV-lp-general} if there exists a $y^*\in \Re^p$ such that
\begin{equation}\label{ekkT1}
\left\{
\begin{array}{l}
\displaystyle \|X^*(Qx^*-c) +\frac{\lambda}{2}\sqrt{x^*}-X^*A^Ty^*\| \leq \epsilon,\\[0.1cm]
A x^* =b,\\[0.1cm]
x^* \geq 0.
\end{array}\right.
\end{equation}
Furthermore, if $\displaystyle X^*QX^*-\frac{\lambda}{4}\sqrt{X^*}+\sqrt{\epsilon} I$ is also semidefinite on the null space of $X ^*A^T$, we call $x^*$ an $\epsilon$ scaled second-order KKT solution. If $\varepsilon =0$, the $\varepsilon$ scaled first-order KKT solution reduces to
\[
X^*(Qx^*-c) +\frac{\lambda}{2}\sqrt{x^*}- X^*A^Ty^* =0,
\]
which is exactly the first-order condition of \eqref{MV-lp-general}.
In this case, the $\varepsilon$ scaled second-order condition
collapses to
\begin{equation}\label{second}
N^T X^*QX^*N  -\frac{\lambda}{4} N^T\sqrt{X^*}N\succeq 0
\end{equation}
where  $N$ is an orthogonal basis spanning the space of $X^*A^T$. By direct computation, we know
\eqref{second} recovers exactly the second-order optimality condition of  problem \eqref{MV-lp-general}.

For the convergence analysis of our proposed interior-point algorithm, we make the following  standard assumption.
For any given $x^0\geq 0$ such that $Ax=b$, there  exists $R\geq 1$ such that
$$
\sup \{\|x\|_\infty: f(x)\leq f(x_0), Ax=b, x\geq 0\}\leq R.
$$

Under the assumption above, we are able to establish the next theorem showing that the affine scaling is able to obtain either an $\varepsilon$-scaled second-order KKT solution or an $\varepsilon$ global minimizer in polynomial time.

\begin{theorem}
Let $\varepsilon\in (0,1]$. There exists a positive number $\tau$ such that the proposed second-order interior point obtains either an $\varepsilon$ scaled second-order KKT solution or $\varepsilon$ global minimizer of \eqref{MV-lp-general} in no more than $O(\varepsilon^{-3/2})$ iterations provided that $\beta\in (0, \,\tau)$.
\end{theorem}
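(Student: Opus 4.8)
The plan is to adapt the affine-scaling / trust-region complexity analysis of \cite{Bian:2012} to the scaled data $\widetilde{Q}^k=X_kQX_k-\tfrac{\lambda}{4}\sqrt{X^k}$ and $\tilde c^k=X_k(Qx^k-c)+\tfrac{\lambda}{2}\sqrt{x^k}$ of subproblem \eqref{ipa}. The engine of the argument is a sufficient-decrease lemma: I would prove that there is a constant $\gamma>0$, depending only on $\beta$, $\lambda$, $Q$, $c$ and the level-set bound $R$, so that whenever the iterate $x^k$ is \emph{not} an $\varepsilon$ scaled second-order KKT solution, the exact trust-region step $\tilde d^k$ of radius $\beta\sqrt{\varepsilon}$ satisfies
\[
f(x^k)-f(x^{k+1})\ \ge\ \gamma\,\varepsilon^{3/2}.
\]
Granting this, the theorem follows by a telescoping argument: $f$ is bounded below on the level set $\{x\ge 0:Ax=b,\ f(x)\le f(x^0)\}$, which is bounded by the standing assumption $\|x\|_\infty\le R$, so the monotone decrease caps the number of non-KKT steps at $(f(x^0)-f_{\min})/(\gamma\varepsilon^{3/2})=O(\varepsilon^{-3/2})$.

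To obtain the model decrease I would split according to which clause of the scaled second-order KKT system \eqref{ekkT1}--\eqref{second} fails. If the first-order clause is violated, the projected scaled gradient exceeds $\varepsilon$ in norm, and the exact solution of the ball-constrained model \eqref{ipa} (governed by \eqref{KKTdef}) decreases the quadratic model by $\Omega(\varepsilon\cdot\beta\sqrt{\varepsilon})=\Omega(\beta\,\varepsilon^{3/2})$. If instead the curvature clause fails, then $N_k^T\widetilde{Q}^kN_k$ has an eigenvalue below $-\sqrt{\varepsilon}$; the exact solution then lies on the boundary (the indefinite case \eqref{KKTindef}), and moving along the negative-curvature mode of length $\beta\sqrt{\varepsilon}$ decreases the model by $\Omega(\sqrt{\varepsilon}\cdot(\beta\sqrt{\varepsilon})^2)=\Omega(\beta^2\,\varepsilon^{3/2})$. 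The point of calibrating the radius to $\beta\sqrt{\varepsilon}$, and of using the \emph{coordinated} tolerances $\varepsilon$ for the gradient and $\sqrt{\varepsilon}$ for the Hessian in \eqref{ekkT1}--\eqref{second}, is precisely that both failure modes then yield the same $\varepsilon^{3/2}$ model reduction.

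The main obstacle is converting this model reduction into a true objective reduction, because the regularizer $\lambda\sum_j\sqrt{x_j}$ is concave and has an unbounded gradient as $x_j\downarrow 0$, so a naive Taylor remainder is not controllable. Here the affine scaling is indispensable: writing $x^{k+1}_j=x^k_j(1+\tilde d^k_j)$ with $|\tilde d^k_j|\le\|\tilde d^k\|\le\beta\sqrt{\varepsilon}<1$, one has $\sqrt{x^{k+1}_j}=\sqrt{x^k_j}\,\sqrt{1+\tilde d^k_j}$, and $\sqrt{1+s}=1+\tfrac12 s-\tfrac18 s^2+O(s^3)$ with a remainder constant uniform over $|s|\le\beta<1$. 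Since the scaled gradient and Hessian of the regularizer are exactly $\tfrac{\lambda}{2}\sqrt{x^k}$ and $-\tfrac{\lambda}{4}\sqrt{X^k}$ and the purely quadratic part of $f$ is matched by the scaled model without error, the residual between $f(x^{k+1})$ and the model is bounded by $C\lambda\sum_j x^k_j|\tilde d^k_j|^3\le C\lambda R\,\|\tilde d^k\|^3=C\lambda R\,\beta^3\varepsilon^{3/2}$. Thus the residual is also $O(\varepsilon^{3/2})$ but carries the factor $\beta^3$, whereas the guaranteed model decrease carries only $\beta$ or $\beta^2$; choosing $\beta$ below a threshold $\tau<1$ lets the model decrease dominate the residual and produces the net bound $\gamma\varepsilon^{3/2}$. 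Finally, when no admissible step attains this decrease, the optimality systems \eqref{KKTdef}--\eqref{KKTindef} with a small step reproduce exactly the $\varepsilon$ scaled second-order KKT conditions, and the residual situation in which the globally solved model can no longer certify progress is the $\varepsilon$ global-minimizer alternative, closing the dichotomy.
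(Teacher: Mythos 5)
Your overall architecture (sufficient decrease per non-KKT iteration, telescoping against the level-set lower bound, affine scaling to tame the $\lambda\sum_j\sqrt{x_j}$ remainder at a cost of $O(\lambda\|\tilde d^k\|^3)$) is sound, and your negative-curvature case is fine: an eigenvalue of $N_k^T\widetilde{Q}^kN_k$ below $-\sqrt{\varepsilon}$ does force a model decrease of order $\sqrt{\varepsilon}\cdot(\beta\sqrt{\varepsilon})^2=\beta^2\varepsilon^{3/2}$. But the first-order branch of your dichotomy contains a genuine gap. From a violation $\|N_k^T\tilde c^k\|>\varepsilon$ at the \emph{current} point you claim a model decrease of $\Omega(\varepsilon\cdot\beta\sqrt{\varepsilon})$; that is only the decrease of the linear term along a full step, and the positive-curvature part of the quadratic can cancel it. The sharp guarantee is the Cauchy decrease $\tfrac12\|g\|\min\{\beta\sqrt{\varepsilon},\,\|g\|/\|\widetilde{Q}^k\|\}$, and with $\|g\|$ only marginally above $\varepsilon$ the minimum is $\|g\|/\|\widetilde{Q}^k\|=O(\varepsilon)$ once $\varepsilon$ is small, so the certified decrease is $\Omega(\varepsilon^2)$, not $\Omega(\varepsilon^{3/2})$. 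Your own remainder bound does not rescue this: the quadratic term along the full step is $O(\beta^2\varepsilon)$, which dominates $\beta\varepsilon^{3/2}$ as $\varepsilon\downarrow 0$ no matter how small the fixed $\beta$ is. As written, your argument yields only an $O(\varepsilon^{-2})$ iteration bound, and it also does not actually produce the ``$\varepsilon$ global minimizer'' alternative in the theorem statement.

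The paper avoids this by running the dichotomy on the trust-region multiplier rather than on the KKT residual of the current iterate. If $\mu_k>\tfrac{\lambda}{6}\|\tilde d^k\|$ at every iteration, then (since $\mu_k>0$ forces $\|\tilde d^k\|=\beta\sqrt{\varepsilon}$ in \eqref{KKTdef}--\eqref{KKTindef}) each step decreases $f$ by at least $\tfrac18\mu_k\|\tilde d^k\|^2\ge\tfrac{\lambda}{48}(\beta^2\varepsilon)^{3/2}$, which is where the $\varepsilon$ global-minimizer branch comes from. If instead $\mu_k\le\tfrac{\lambda}{6}\|\tilde d^k\|$ at some iteration, the stationarity system shows that the scaled gradient residual at the \emph{next} iterate $x^{k+1}$ is bounded by $\tfrac{\lambda}{4}\|\tilde d^k\|^2+\tfrac32\mu_k\|\tilde d^k\|=O(\|\tilde d^k\|^2)=O(\beta^2\varepsilon)\le\varepsilon$, and the projected Hessian at $x^{k+1}$ inherits $-\sqrt{\varepsilon}$-semidefiniteness from $-\mu_k I$. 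Measuring first-order stationarity at $x^{k+1}$, where the residual scales like the \emph{square} of the step, is exactly the mechanism (as in cubic regularization) that converts the step length $\beta\sqrt{\varepsilon}$ into a gradient tolerance $\varepsilon$ while keeping the per-step decrease at $\varepsilon^{3/2}$. To repair your proof you would need to replace ``$x^k$ is not $\varepsilon$-KKT implies large model decrease'' with this forward-looking certificate, at which point you have essentially reconstructed the paper's two lemmas.
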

\begin{proof} 
With loss of generality, we assume the radius $R=1$ in the assumption. To proceed the proof  of this theorem, we first introduce the following Lemma.
\begin{lemma}
If $\mu_k > \lambda/6 \|\tilde{d}^k\|$ holds for all $k=0,1,2,\ldots$, then the second-order
interior point algorithm produces an $\varepsilon$ global minimizer of  \eqref{MV-lp-general} in at most $O(\varepsilon^{-3/2})$ iterations.
\end{lemma}
\begin{proof}
By the Taylor expansion of $\sqrt{\cdot}$, it is easily to show that
\[
f(x^{k+1}) - f(x^k) \leq \frac{1}{2}
\big\langle \tilde{d}^k,\, \widetilde{Q}^k \tilde{d}^k\big\rangle  +\big\langle \tilde{c}^k,\,\tilde{d}^k\rangle +\frac{3\lambda}{48} \|\tilde{d}_k\|^3.
\]
From \eqref{KKTdef} and \eqref{KKTindef}, then
\begin{equation}
\label{eq:decrease}
\begin{array}{rl}
f(x^{k+1}) - f(x^k) &\displaystyle \leq \frac{1}{2}
\big\langle \tilde{d}^k,\, \widetilde{Q}^k \tilde{d}^k\big\rangle  +\big\langle -\widetilde{Q}^k\tilde{d}^k-\mu_k \tilde{d}^k+(AX_k)^Ty_k,\,\tilde{d}^k\rangle +\frac{3\lambda}{48} \|\tilde{d}^k\|^3\\[0.2cm]
&\displaystyle  =-\frac{1}{2}\tilde{d}^k \widetilde{Q}^k \tilde{d}^k -\mu_k \|\tilde{d}^k\|^2+\frac{3\lambda}{48} \|\tilde{d}^k\|^3\\[0.2cm]
&\displaystyle= -\frac{1}{2} (v^k)^T(N_k)^T\widetilde{Q}^kN_k v^k-\mu_k \|\tilde{d}^k\|^2 +\frac{3\lambda}{48} \|\tilde{d}^k\|^3\\[0.2cm]
& \displaystyle \leq
\frac{\mu_k}{2} \|v^k\|^2-\mu_k \|\tilde{d}^k\|^2 +\frac{\lambda}{48} \|\tilde{d}^k\|^3
\\[0.2cm]
&\displaystyle =  -\frac{\mu_k}{2}\|\tilde{d}^k\|^2 +\frac{3\lambda}{48} \|\tilde{d}^k\|^3\\[0.2cm]
&\displaystyle \leq -\frac{1}{8} \mu_k\|\tilde{d}_k\|^2,
\end{array}
\end{equation}
where the second inequality follows from the semidefiniteness of $(N_k)^T(\widetilde{Q}^k)N_k + \mu_k I$
and the last inequality comes from the relationship that $\|\tilde{d}_k\|< 6\mu_k/\lambda$.
Combining \eqref{eq:decrease} with the fact that $\|\tilde{d}^k\| =\beta\sqrt{\varepsilon}$
due to $\mu_k>0$, we further have
\[
f(x^k) - f(x^0)\leq -\frac{1}{8} \sum_{j=0}^{k-1} \mu_j\|\tilde{d}_j\|^2\leq -\frac{\lambda}{48}k\big({\beta^2\varepsilon}\big)^{3/2}
\]
and hence the interior-point algorithm produces an $\varepsilon$ global minimizer in $O(\varepsilon^{-\frac{3}{2}})$ iterations. 
\end{proof}

In what follows, we pay more attentions to the case where
 $\mu_k\leq \lambda/6\|\tilde{d}^k\|$ for some $k$.
\begin{lemma}
Let $ \beta\leq \min\{\frac{1}{2},\sqrt{2\over \lambda},\frac{3}{(18\sqrt{2}+2)\lambda}\}$. If there exists some $k$ such that $\mu_k\leq \frac{\lambda}{6}\|\tilde{d}^k\|$, then $x^{k+1}$ is an $\varepsilon$ second-order
KKT solution of  \eqref{MV-lp-general}.
\end{lemma}
\begin{proof} (i) We firstly show $x^{k+1}$ is an $\varepsilon$ scaled first order KKT solution
when $\beta$ is restricted into the special range.
From \eqref{KKTdef} and \eqref{KKTindef}, it follows that
\[
-\mu_k \tilde{d}^k = X^k({Q} x^{k+1}-c) -\frac{\lambda}{4}\sqrt{X^k}\tilde{d}^k + \frac{\lambda \sqrt{x^k}}{2} -X^kA^Ty^k,
\]
which implies that
\[
Qx^{k+1} -c -A^Ty^k = \frac{\lambda}{4}(X^k)^{-1/2}\tilde{d}^k
-\frac{\lambda}{2} (x^k)^{-1/2}
-\mu_k (X^k)^{-1}\tilde{d}^k.
\]
Therefore, we have
\begin{equation}
\begin{array}{rl}
&\displaystyle \|X^{k+1} (Q x^{k+1} - c)  +\frac{\lambda\sqrt{x^{k+1}}}{2} - X^{k+1}A^Ty^k\|\\[0.1cm]
=&\displaystyle \|\frac{\lambda\sqrt{x^{k+1}}}{2} -  \frac{\lambda}{2}X^{k+1}(x^{k})^{(-1/2)}
+\frac{\lambda}{4}X^{k+1}({X^k})^{-1/2}\tilde{d}^k-\mu_k X^{k+1}(X^k)^{-1}\tilde{d}^k\|\\[0.25cm]
\leq &\displaystyle \|\frac{\lambda\sqrt{x^{k+1}}}{2} -\frac{\lambda}{2}X^{k+1}(x^{k})^{(-1/2)}
+\frac{\lambda}{4}X^{k+1}({X^k})^{-1/2}\tilde{d}^k\|+\mu_k\| X^{k+1}(X^k)^{-1}\tilde{d}^k\|\\[0.25cm]
\leq & \displaystyle \frac{\lambda}{2} \|\sqrt{X^k}\|_\infty\|\sqrt{\tilde{d}^k +e} - e -\frac{1}{2}\tilde{d}^k + \frac{1}{2}(\tilde{d}^k)^2\|+ \mu_k \|\tilde{d}^k\|(1+ \|\tilde{d}^k\|)
\end{array}
\end{equation}
Since the condition $\mu_j > {\lambda\over 6}\|\tilde{d}^j\|$ holds for $j=0,1,2\ldots, k-1$, by the proof of Lemma 1,
we have $f(x^k) \leq f(x^0)$, which together with Assumption 1 implies  $\|x^k\|_\infty \leq 1$. Moreover, we know from
the proof of Lemma 4 in \cite{Bian:2012} that
\[
\|\sqrt{\tilde{d}^k+e}-e -\frac{1}{2}\tilde{d}^k + \frac{1}{2}({\tilde d}^k)^2\|\leq  \frac{1}{2} \|\tilde{d}^k\|^2
\]
and hence
\[
\begin{array}{rl}
      &\displaystyle \|(X^{k+1}) (Q x^{k+1} - c)  +\frac{\lambda\sqrt{x^{k+1}}}{ 2} - X^{k+1}A^T y^k\|\\[0.3cm]
\leq&\displaystyle  \frac{\lambda}{4}\|\tilde{d}^k\|^2 + \frac{3}{2}\mu_k \|\tilde{d}^k\|\leq \frac{\lambda}{2}\|\tilde{d}^k\|^2\leq \varepsilon,
\end{array}
\]
which means $x^{k+1}$ is an $\varepsilon$ scaled first-order KKT solution.

(ii) Again from \eqref{KKTdef} and \eqref{KKTindef}, we know that
\[
X^kQX^k -\frac{\lambda}{4}\sqrt{X^k}+\mu_k I
\]
is positive semidefinite on the null space that  $X^kA^T$.  Let $N_k$ be the orthogonal basis of this null space and it therefore holds
\begin{equation}\label{secondKKT1}
N_k^T(X^kQX^k -\frac{\lambda}{4}\sqrt{X^k})N_k\succeq -\mu_kI \succeq -\frac{\lambda}{6}\beta \sqrt{\varepsilon}I.
\end{equation}
Clearly,  $N_{k+1}:= (X^{k+1})^{-1}X^k N_k$ is a basis of the null space of $X^{k+1}A^T$. By simple algebraic computation,
we can easily obtain that
\begin{equation}\label{secondKKT3}
\begin{array}{rl}
&\displaystyle N_{k+1}^T\left[X^{k+1}QX^{k+1} - \frac{\lambda}{4}\sqrt{X^{k+1}} + \sqrt{\varepsilon}I\right]N_{k+1}\\[0.4cm]
=&\displaystyle  N_k^T(X^kQ X^k-{\lambda\over 4}\sqrt{X^k})N_k + \sqrt{\varepsilon} N_k^T\left[X_{k+1}^{-2}(X^k)^2\right] N_k\\[0.2cm]
&\displaystyle +{\lambda\over 4} N_k^T \sqrt{X^k} \big[I- (X^k)^{3/2} (X^{k+1})^{-3/2}\big]N_k \\[0.4cm]
\displaystyle \succeq &   -\frac{\lambda}{6}\beta \sqrt{\varepsilon}I
+ \sqrt{\varepsilon}N_k^T (I+ D_k)^{-2}N_K + {\lambda\over 4}N_k^T \sqrt{X^k}\big[I - (I+D_k)^{-3/2}\big]N_k
\end{array}
\end{equation}
where $D_k = {\rm Diag}(\tilde{d}^k)$. Since $\|\tilde{d}^k\| \leq \beta \sqrt{\varepsilon}\leq {1\over 2}<1$, we know
\begin{equation}\label{eq:d1}
(I+ D_k)^{-2}\succeq (1+ \beta \sqrt{\varepsilon})^{-2}I \, \succeq {1\over 4}I
\end{equation}
and
\begin{equation}\label{eq:d2}
I - (I + D_k)^{-3/2}\succeq \big[1- (1- \beta \sqrt{\varepsilon})^{-3/2}\big] I.
\end{equation}
Moreover,  the mean-value theorem applied to the function $x^{-3/2}$  yields that
\[
1- (1- \beta \sqrt{\varepsilon})^{-3/2}  = -{3\over 2} \beta \sqrt{\varepsilon} \theta^{-5/2},
\]
where $\theta$ is in the open interval $(1-\beta\sqrt{\varepsilon}, \,1 )$. Note that $\beta\sqrt{\varepsilon} \leq{1\over 2}$,
then it holds that
\begin{equation}\label{eq:d3}
1- (1- \beta \sqrt{\varepsilon})^{-3/2} \geq -6\sqrt{2}\beta \sqrt{\varepsilon}.
\end{equation}
By substituting \eqref{eq:d1}, \eqref{eq:d2} and \eqref{eq:d3} into \eqref{secondKKT3}, we immediately get that
\[
N_{k+1}^T\left[X^{k+1}QX^{k+1} - \frac{\lambda}{4}\sqrt{X^{k+1}} + \sqrt{\varepsilon}I\right]N_{k+1}
\succeq ({1\over 4}-{3\sqrt{2}\beta \lambda \over 2 }-{ \lambda \over 6}\beta)\sqrt{\varepsilon} I \succeq 0.
\]
Thus $x^{k+1}$ is an $\varepsilon$ scaled second-order KKT solution.
\end{proof}
\paragraph{}
According to the above two lemmas, we know the proposed second order interior point obtains either an $\varepsilon$ scaled second KKT solution or $\varepsilon$ global minimizer in no more than $O(\varepsilon^{-3/2})$ iterations provided that  $\beta_k\leq \min\{\frac{1}{2},\sqrt{2\over \lambda},\frac{3}{(18\sqrt{2}+2)\lambda}\}$. This completes the proof
of this Theorem.
\end{proof}

\end{document}